\documentclass[10pt]{article}
\pdfoutput=1

\usepackage[margin=1in]{geometry}
\usepackage[english]{babel}
\usepackage[utf8]{inputenc}
\usepackage[T1]{fontenc}
\usepackage{lmodern}
\usepackage{microtype}
\usepackage[font=small]{caption}
\usepackage[font=small]{subcaption}
\usepackage{amsmath,amssymb,amsthm}
\usepackage[inline]{enumitem}
\usepackage{graphicx}
\usepackage{multirow}
\usepackage[hidelinks]{hyperref}
\usepackage{xspace}
\usepackage[numbers,sort,longnamesfirst]{natbib}
\usepackage{doi,url}

\usepackage{xcolor}
\usepackage{authblk}
\usepackage{tikz}
\usetikzlibrary{calc,arrows.meta,shapes.geometric,decorations.pathreplacing}

\definecolor{figblue}{RGB}{0,90,181}
\definecolor{figred}{RGB}{220,50,32}
\definecolor{figgray}{RGB}{130,130,130}
\tikzset{
	qnode/.style={regular polygon, regular polygon sides=4, draw=black,
		fill=black, inner sep=2pt},
	cand/.style={circle, draw=figgray, fill=white, inner sep=2.1pt},
	selnode/.style={circle, draw=figblue!80!black, fill=figblue, inner sep=2.4pt},
	goldnode/.style={circle, draw=figblue!80!black, fill=figblue, inner sep=2.4pt},
	dupnode/.style={circle, draw=figred, densely dotted, fill=figred!15,
		inner sep=2.1pt},
	activenode/.style={circle, draw=figred!80!black, fill=figred, inner sep=2.2pt},
	paneltitle/.style={font=\small\itshape},
	tinylabel/.style={font=\scriptsize},
}

\theoremstyle{plain}
\newtheorem{prop}{Proposition}[section]
\newtheorem{cor}[prop]{Corollary}
\newtheorem{thm}[prop]{Theorem}
\theoremstyle{definition}
\newtheorem{defn}[prop]{Definition}
\theoremstyle{remark}

\numberwithin{equation}{section}

\newcommand{\R}{\mathbb{R}}
\newcommand{\E}{\mathbb{E}}
\newcommand{\RNG}{\textsc{RNG}\xspace}
\newcommand{\MMR}{\textsc{MMR}\xspace}
\newcommand{\DPP}{\textsc{DPP}\xspace}
\newcommand{\ANN}{\textsc{ANN}\xspace}
\newcommand{\RNGSCORE}{\textsc{RNG-Score}\xspace}
\newcommand{\score}[1]{\operatorname{score}_{#1}}
\newcommand{\similarity}[1]{\operatorname{sim}_{#1}}
\newcommand{\CE}{\text{CE}}
\newcommand{\hinge}[1]{\left[#1\right]_+}

\hypersetup{
	pdftitle={Finding the Right Balance: Relevance and Diversity in LLM Retrieval},
	pdfauthor={Guillaume Brouillette, Faustin Kagabo, Usef Faghihi, Nadia Ghazzali},
}

\title{Finding the Right Balance: Relevance and Diversity in LLM Retrieval}

\author[1]{Guillaume Brouillette\thanks{\texttt{Guillaume.Brouillette@uqtr.ca}}}
\author[1]{Faustin Kagabo\thanks{\texttt{Faustin.Kagabo@uqtr.ca}}}
\author[1]{Usef Faghihi\thanks{\texttt{Usef.Faghihi@uqtr.ca}}}
\author[1]{Nadia Ghazzali\thanks{\texttt{Nadia.Ghazzali@uqtr.ca}}}
\affil[1]{D\'epartement de math\'ematiques et d'informatique,\protect\\ Laboratoire d'intelligence artificielle appliqu\'ee (LI2A),\protect\\ INSTINT (Informatique, Statistiques, et Technologies Intelligentes Transdisciplinaires),\protect\\ Universit\'e du Qu\'ebec \`a Trois-Rivi\`eres, Trois-Rivi\`eres, QC, Canada}

\date{October 6, 2026}

\begin{document}

	\maketitle

	\begin{abstract}
		Retrieval diversification is widely available in retrieval-augmented generation (RAG) frameworks, yet prior studies disagree on whether it improves retrieval and answer quality. We show that its effectiveness varies primarily with candidate-pool redundancy, in a pattern consistent with the number of distinct evidence pieces a query requires. Using controlled near-duplicate injection and production-style overlapping chunking, we find that diversification harms relevance, evidence coverage and answer quality on clean pools, but becomes beneficial on multi-evidence tasks when redundancy causes nearest-neighbor retrieval to select repeated passages. We therefore introduce a query-adaptive rule that diversifies only when the effective number of distinct documents in the nearest-neighbor top-$k$ selection falls below the query's evidence requirement. Computed from existing embeddings, the rule captures most of the achievable gain, transfers across datasets and encoders and automatically reduces to nearest-neighbor retrieval for single-evidence queries. We also introduce \RNGSCORE, a geometric reranker with an exact nearest-neighbor fallback whose margin indicates duplicate structure. Overall, we conclude that diversification should be used selectively, based on observable redundancy and evidence requirements.
	\end{abstract}

	\section{Introduction}

	Retrieval-Augmented Generation (RAG) augments a language model with external documents retrieved at inference time \cite{lewis2020rag,guu2020realm,izacard2021fid,borgeaud2022retro,shi2023replug,asai2023selfrag}. In its simplest form, retrieval returns the top-$k$ nearest documents to a query embedding. A long-standing concern with this strategy is redundancy: several retrieved passages may say nearly the same thing and thereby waste a context budget that language models already use unevenly \cite{liu2024lostmiddle}. This concern has motivated a growing line of work that injects an explicit diversity objective into the retrieval or reranking stage of RAG pipelines \cite{rezaei2025vendirag,khan2026dfrag,corteslemos2025drag,li2024smartrag,lu2026ccbqp,adagres2025,dppscaling2026}. In the same spirit, diversification heuristics such as Maximal Marginal Relevance (\MMR) \cite{carbonell1998mmr} ship as built-in one-parameter retrieval modes in widely used frameworks,\footnote{For example, \texttt{search\_type="mmr"} in LangChain vector-store retrievers and \texttt{vector\_store\_query\_mode="mmr"} in LlamaIndex. In both frameworks plain similarity search is the default and \MMR\ is enabled by a single switch, with no guidance on when it should be.} so that a practitioner can toggle diversification on.

	However, it is unclear whether diversifying document retrieval in RAG pipelines helps or harms. In fact, many works on the subject reach different conclusions. Diversity-aware pipelines report end-to-end gains of several points on multi-hop question answering \cite{rezaei2025vendirag,khan2026dfrag}, while controlled evaluations of retrieval configurations find that \MMR-style diversification yields no significant improvement over plain nearest-neighbor retrieval \cite{eibich2024aragog,bal2026biomedrag}.

	These diverging conclusions are reached in different contexts where the retrieval diversity is given \emph{instrumental} credit, meaning that diversity is not the end goal but rather a means to an end, as in a RAG pipeline. When an evaluation grants diversity \emph{intrinsic} credit, namely when its metric suite rewards diversity directly, as the beyond-accuracy metrics of recommendation and search-result diversification do \cite{clarke2008alphaNDCG,vargas2011rank,kaminskas2017diversity}, then diversifying retrieval improves results by design.

	\begin{figure}[t]
		\centering
		\includegraphics[width=0.72\linewidth, alt={Line plot of retrieval coverage (S-Recall at 5) against the measured near-duplicate pair fraction on a logarithmic axis for three policies on HotpotQA fullwiki: always k-NN, always diversify with MMR at lambda 0.7, and the decision rule with tau equal to h. Always k-NN is highest on clean pools and collapses beyond a shaded crossover bracket, always diversifying stays nearly flat at a lower level, and the decision rule follows the upper of the two curves across the whole range.}]{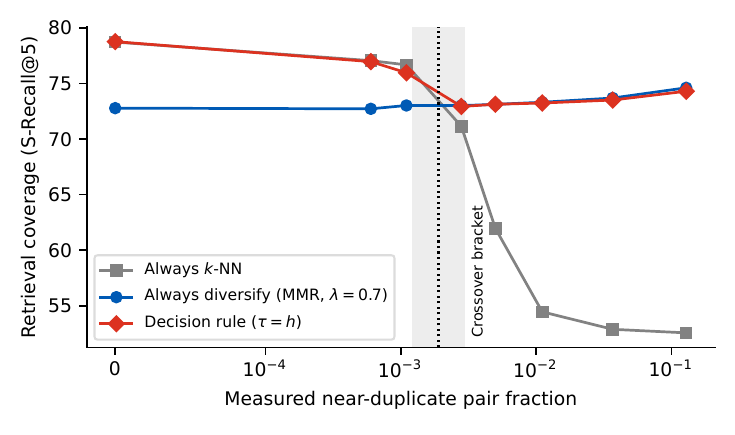}
		\caption{The paper's central result: coverage of three policies on HotpotQA fullwiki as measured pool redundancy grows under controlled injection. Nearest-neighbor selection, the framework default, collapses past the shaded crossover bracket (measured redundancy between $0.0012$ and $0.0030$, Section~\ref{sec:rq2}). Diversifying hurts on clean pools and the decision rule of Section~\ref{sec:rule} tracks the upper envelope of both.}
		\label{fig:regimes-summary}
	\end{figure}

	This paper analyzes the impact of diversity at the retrieval stage in a RAG pipeline. Thus, it is scoped to a context where diversity is granted instrumental credit. Within that scope we organize the question mainly around one observable variable, namely the redundancy of the candidate pool. This leads to a diagnostic computed from the system's own embeddings which, thresholded at the query's evidence requirement, suffices to decide whether to diversify. We also quantify how much of the available gain that rule captures. Figure~\ref{fig:regimes-summary} summarizes the result.

	Furthermore, the study needs a diversifier predictable enough to serve as a stable probe across regimes. We use a simple geometric construction inspired by the concept of relative neighborhood graph (\RNG) \cite{toussaint1980rng}, namely the \RNGSCORE (see Figure~\ref{fig:knn-vs-rng}). At both extremes of its margin parameter $\gamma$, this score reduces exactly to nearest-neighbor ranking, so validated tuning over a grid containing such a margin can always express ``do not diversify''. This exact two-sided fallback anchors the \RNGSCORE to the relevance-only baseline, and its validation-tuned margin additionally provides an offline readout of the pool's duplicate structure.

	The contributions of this paper include:
	\begin{enumerate}[label=$\mathrm{(\arabic*)}$]
		\item\label{enum:decisionrule} a \emph{decision rule} which diversifies when the effective number of distinct documents in the nearest-neighbor top-$k$ falls below the number of evidence pieces the query requires, keeps its clean-pool cost within a one-point tolerance on most evaluated sweeps, captures most of the achievable adaptive gain, transfers with no retuning across encoders, datasets, redundancy mechanisms and selection sizes and switches itself off on single-evidence retrieval where diversification has nothing to win;
		\item a regime analysis of retrieval diversification organized around pool redundancy, with protocols that measure redundancy on the embedded pool and manipulate it directly (near-duplicate injection, production-style overlapping chunking, a weak-versus-strong pipeline contrast), all reported on one measured-redundancy axis;
		\item a controlled, seed-replicated study of diversified reranking on fixed pools across multi-hop QA, open-domain QA and BEIR, showing that on clean pools the classical diversifiers lower relevance and coverage-aware diversity in nearly every case, significantly so on the primary coverage endpoint of the headline benchmark;
		\item per-query oracle bounds certifying that on clean pools, no policy routing among the evaluated methods and operating points can gain more than marginally, with headroom opening in proportion to measured redundancy;
		\item misspecification downside, the worst shortfall of a method against the $k$-NN default over its hyperparameter range when the regime is unknown, as an evaluation criterion measured for every method; the \RNGSCORE provably reduces to $k$-NN at both margin extremes, the negative end of its default grid is verified to reproduce $k$-NN on the evaluated clean pools, and its tuned margin doubles as an offline diagnostic of duplicate structure.
	\end{enumerate}

	The decision rule \ref{enum:decisionrule} is the most practical contribution: the crossover from the clean to the redundant regime is detectable per query, from the retrieval system's own embeddings and the task's evidence requirement, without relevance labels and without knowing which regime one is in.

	\section{Background and Related Work}
	\label{sec:background}

	\subsection{Modern RAG pipelines}

	Modern RAG systems decompose retrieval into stages. A first-stage retriever produces a candidate pool by sparse, dense or hybrid retrieval \cite{robertson2009bm25,karpukhin2020dpr,xiong2021ance,thakur2021beir}, typically accelerated with \ANN infrastructure \cite{johnson2019faiss,malkov2018hnsw}. The pool is usually refined before generation using stronger rerankers, cross-encoders or late-interaction models \cite{nogueira2019bertrerank,khattab2022colbertv2,sharma2025ragfrontiers,akarsu2026bm25corrective}. Encoders range from compact sentence-embedding models \cite{reimers2019sbert} to stronger families such as BGE-M3 and Qwen3-Embedding \cite{Chen2024,Zhang2025Qwen3,xiao2023bge} and proprietary services. Agentic, corrective, graph-based and chain-of-retrieval paradigms iterate the retrieval step \cite{singh2025agenticrag,yan2024crag,edge2024graphrag,li2025corag}. Diversification is orthogonal to all of these choices: every such pipeline ultimately selects $k$ documents from a finite pool and the question is whether that selection should be diversified.

	\subsection{The case for diversity in RAG}
	\label{sec:prodiv}

	Classical information retrieval offers several diversification mechanisms: \MMR greedily trades relevance against similarity to already selected items \cite{carbonell1998mmr}; \DPP-based methods model repulsion probabilistically \cite{kulesza2012dpp,chen2018fastdpp,wilhelm2018practical}; Maxmin and related dispersion objectives maximize spread directly \cite{ravi1994maxmin,borodin2017beyond,gollapudi2009axiomatic}; and intent-aware approaches diversify over explicit query aspects \cite{agrawal2009diversifying,santos2010exploiting}. Moreover, work has been done to develop selective diversification in the context of web search results \cite{santos2010selective} based on specific characteristics of a given query, such as ambiguity \cite{cronen2002predicting}.

	A recent line of work argues that RAG specifically benefits from diversity-aware retrieval. Vendi-RAG \cite{rezaei2025vendirag} couples a Vendi-score retrieval objective \cite{friedman2023vendi} with an iterative pipeline in which a LLM judge scores candidate answers and the diversity weight is adapted across rounds. It reports accuracy gains of up to $4.2\%$ on HotpotQA over Adaptive-RAG baselines \cite{jeong2024adaptiverag}. DF-RAG \cite{khan2026dfrag} builds a geometric \MMR variant and selects a per-query diversity level with planner and evaluator LLMs. It reports F1 gains of $4$--$10\%$ over vanilla RAG with an estimated per-query oracle ceiling around $18\%$. Retriever portfolios extend the adaptive idea to the retriever level, training a lightweight router that selects per query among a small complementary set of retrievers including diversity-oriented members \cite{stouras2026portfolios}. Submodular set functions have been applied to the reranking stage to balance relevance and coverage \cite{corteslemos2025drag}, alongside determinantal-point-process selection \cite{li2024smartrag} and relevant-information-gain criteria \cite{pickett2024infogain}, and scalable set-selection formulations have been proposed for the underlying combinatorial problem \cite{lu2026ccbqp,dppscaling2026}. Adjacent adaptive mechanisms let the pipeline decide how many documents to keep per query \cite{meng2025dps,xu2025car} and ambiguous-question pipelines retrieve across inferred interpretations of the query \cite{in2025diva}. A distinct, query-side notion of diversity decomposes queries by lexical diversity for finer-grained relevance assessment \cite{zhang2025drag} or replaces inter-document dissimilarity altogether with coverage of a sub-query demand distribution \cite{zhang2026georag}. Diversity has also been found beneficial for open-ended generation where comprehensiveness or answer diversity is itself the target \cite{salemi2025planrefine,diverge2026} and for long-context selection \cite{wang2025diversityllm}. Notably, the most recent entries in this line motivate diversification explicitly by pool redundancy: AdaGReS \cite{adagres2025} targets redundant and near-duplicate chunks that waste a token budget. Pool redundancy is also emerging as an evaluation variable in its own right: standard benchmarks measure far cleaner than the finance, legal and patent corpora on which strong retrievers degrade \cite{cho2026rare} and exact deduplication pays off only outside the clean academic regime \cite{schelpe2026dedup}, though neither line evaluates the selection step itself.

	The experimental regimes behind these positive results deserve attention: Vendi-RAG entangles diversification with extra LLM calls and never isolates the selection step on a fixed pool, while DF-RAG retrieves from long contexts chunked into fixed-length windows, a pool redundant by construction, and its headline gains come from per-query adaptive selection of the diversity level rather than from fixed diversification. The set-selection and reranking methods \cite{corteslemos2025drag,li2024smartrag,pickett2024infogain,adagres2025} do isolate the diversification step on a fixed candidate pool, but not in the regime studied here: they evaluate on diversity-focused or redundancy-heavy benchmarks---AdaGReS targets near-duplicate chunks by design---or with older first-stage encoders such as Contriever \cite{izacard2022unsupervised}. None isolates the diversification step on pools drawn from nonredundant corpora with state-of-the-art encoders, the configuration of most production vector-search deployments.

	\subsection{The case against diversity in RAG}
	\label{sec:contradiv}

	A complementary literature reaches the opposite conclusion. For web search, \citet{santos2012novelty} showed that novelty-based diversification, which promotes documents merely for being dissimilar to those already selected, does not consistently improve over a standard non-diversified ranking and distinguished such geometric novelty from genuine coverage of query aspects. Their analysis, however, predates RAG and does not relate the value of diversification to measured candidate-pool redundancy or to the evidence requirements of the query, the axes studied here. Controlled evaluations find that \MMR offers no significant advantage over naive nearest-neighbor retrieval \cite{eibich2024aragog}, that reranking is among the highest-leverage components of a RAG pipeline \cite{wang2024bestpractices} and, in a biomedical study holding the generator and embeddings fixed, that \MMR ``sacrifices answer relevancy for diversity'' while the dense baseline performs comparably to the best configuration \cite{bal2026biomedrag}. Studies of context composition show that passages semantically related to the query but not containing the answer are precisely the ones that harm the generator \cite{cuconasu2024noise,amiraz2025distracting} and geometric diversification, by design, substitutes exactly such passages. Analyses of context sufficiency \cite{joren2025sufficient} likewise indicate that generation needs coverage of the required information, which geometric spread does not measure.

	\subsection{Diversification in recommendation}
	\label{sec:recsys-related}

	Recommender systems are the setting in which diversity is given intrinsic evaluation credit through beyond-accuracy metrics that reward explicitly genre coverage and intra-list dissimilarity \cite{vargas2011rank,kaminskas2017diversity}. The standard methodology is post-processing reranking: a relevance-optimized recommender produces a size-$m$ candidate list per user and a greedy reranker selects $n<m$ items balancing relevance against diversity \cite{carbonell1998mmr,wilhelm2018practical,carraro2024llmdiversity}. Controlled comparisons under this protocol consistently find that diversification pays with \DPP delivering strong coverage gains at deployment scale \cite{wilhelm2018practical,chen2018fastdpp}. Search-result diversification reaches the same conclusion under intent-aware metrics \cite{santos2010exploiting,agrawal2009diversifying,clarke2008alphaNDCG}.

	When diversity is given intrinsic credit, the central question of this paper, namely whether diversification in the retrieval stage of a RAG pipeline helps or harms, does not arise, since diversity is then rewarded in its own right. We therefore scope the study to instrumental credit regimes, where diversity can pay only through the final answer, and refer the reader to the literature above for the intrinsic credit regime.

	\subsection{Diversity evaluation metrics}
	\label{sec:divmetrics}

	Separating useful coverage from raw spread requires both families of diversity metrics. Annotation-free measures, average pairwise distance (APD) and the Vendi score \cite{friedman2023vendi}, quantify geometric spread without reference to task structure. Intent-aware measures from the diversification literature, $\alpha$-NDCG \cite{clarke2008alphaNDCG}, ERR-IA \cite{chapelle2011errIA} and Subtopic Recall \cite{zhai2015subtopicrecall}, reward covering the distinct facets a query requires and penalize redundant coverage. We derive subtopics without extra annotation: for multi-hop QA, the distinct gold supporting documents of a question; for retrieval tasks with relevance judgements, the relevant-document set of the query. Disagreement between the two families is diagnostic: a method that raises APD while lowering $\alpha$-NDCG produces spread without coverage. The same annotation-free statistics, computed on the embedded pool before any labels are consulted, also serve as the redundancy diagnostics of Section~\ref{sec:diagnostics}.

	\subsection{Relative neighborhood graphs}
	\label{sec:rng-related}

	The \emph{relative neighborhood graph} (\RNG) was originally defined for finite planar point sets under the Euclidean distance \cite{toussaint1980rng} and extends naturally to any finite metric space \cite{jaromczyk1992rng}. The soft score studied here requires still less structure, namely a symmetric nonnegative dissimilarity (Appendix~\ref{app:ce-variants}). We say two points $x,y$ are relative neighbors when no third point $z$ satisfies both $d(z,x)<d(x,y)$ and $d(z,y)<d(x,y)$. Equivalently, the open lune of $x$ and $y$, namely the intersection of the open balls of radius $d(x,y)$ centered at $x$ and $y$, is empty (Figure~\ref{fig:lune-rng}).

	\begin{figure}[ht]
		\centering
		\begin{subfigure}[b]{0.46\linewidth}
			\centering
			\begin{tikzpicture}[scale=0.85,
				pt/.style={circle,draw,fill=black,inner sep=2pt}]
				\clip (-1.2,-2.8) rectangle (4.2,2.8);
				\begin{scope}
					\clip (0,0) circle (3);
					\fill[figblue!12] (3,0) circle (3);
				\end{scope}
				\draw[figgray] (0,0) circle (3);
				\draw[figgray] (3,0) circle (3);
				\draw[dashed, thin, figgray] (0,0) -- (3,0);
				\node[tinylabel] at (1.5,0.22) {$d(x,y)$};
				\node[pt, label={[tinylabel]below left:$x$}] at (0,0) {};
				\node[pt, label={[tinylabel]below right:$y$}] at (3,0) {};
				\node[activenode, label={[tinylabel]above right:$z_1$}] at (1.5,1.2) {};
				\node[cand, label={[tinylabel]above:$z_2$}] at (0.5,-2.2) {};
			\end{tikzpicture}
			\subcaption{The lune of $x$ and $y$.}
			\label{fig:lune}
		\end{subfigure}\hfill
		\begin{subfigure}[b]{0.50\linewidth}
			\centering
			\begin{tikzpicture}[scale=1.05,
				pt/.style={circle,draw,fill=black,inner sep=2pt}]
				  \draw[thick, figblue] (3.08,3.07) -- (3.06,3.36);
  \draw[thick, figblue] (3.08,3.07) -- (2.34,3.13);
  \draw[thick, figblue] (3.77,0.92) -- (3.87,1.70);
  \draw[thick, figblue] (3.77,0.92) -- (3.02,0.34);
  \draw[thick, figblue] (1.58,3.00) -- (0.22,2.83);
  \draw[thick, figblue] (1.58,3.00) -- (2.34,3.13);
  \draw[thick, figblue] (0.22,2.83) -- (0.36,1.85);
  \draw[thick, figblue] (3.87,1.70) -- (2.52,1.97);
  \draw[thick, figblue] (3.87,1.70) -- (4.78,2.74);
  \draw[thick, figblue] (1.59,1.09) -- (1.37,1.62);
  \draw[thick, figblue] (1.59,1.09) -- (1.19,0.71);
  \draw[thick, figblue] (1.59,1.09) -- (3.02,0.34);
  \draw[thick, figblue] (1.37,1.62) -- (2.52,1.97);
  \draw[thick, figblue] (1.37,1.62) -- (0.36,1.85);
  \draw[thick, figblue] (2.52,1.97) -- (2.34,3.13);
  \node[pt] at (3.08,3.07) {};
  \node[pt] at (3.77,0.92) {};
  \node[pt] at (1.58,3.00) {};
  \node[pt] at (0.22,2.83) {};
  \node[pt] at (3.87,1.70) {};
  \node[pt] at (1.59,1.09) {};
  \node[pt] at (1.37,1.62) {};
  \node[pt] at (2.52,1.97) {};
  \node[pt] at (4.78,2.74) {};
  \node[pt] at (3.06,3.36) {};
  \node[pt] at (1.19,0.71) {};
  \node[pt] at (3.02,0.34) {};
  \node[pt] at (0.36,1.85) {};
  \node[pt] at (2.34,3.13) {};
			\end{tikzpicture}
			\subcaption{A relative neighborhood graph.}
			\label{fig:rng-graph}
		\end{subfigure}
		\caption{(\subref{fig:lune}) The open lune of $x$ and $y$ (shaded) is the
			intersection of the two open balls of radius $d(x,y)$ centered at $x$ and
			$y$; $z_1$ lies inside it and obstructs the pair $\{x,y\}$, so that pair is
			not an edge of the \RNG, while $z_2$ does not. (\subref{fig:rng-graph}) The
			\RNG\ of a planar point set, with an edge between exactly the pairs whose
			lune is empty.}
		\label{fig:lune-rng}
	\end{figure}

	In the retrieval setting, $x=q$ is the query and $y=w$ is a candidate document. A third document $v$ with $d(q,v)<d(q,w)$ and $d(v,w)<d(q,w)$ lies in the lune of $q$ and $w$: it is closer to the query than $w$ and also close to $w$, making $w$ geometrically redundant relative to $q$. Such a point $v$ is said to \emph{obstruct} the pair $(q,w)$.

	\RNG-style edge pruning is foundational in graph-based approximate nearest-neighbor indexing: HNSW, NSG and DiskANN sparsify neighborhood graphs by discarding edges whose lune contains another indexed point, at index-construction time and between corpus points, with DiskANN's $\alpha$-pruning relaxing the lune test by a multiplicative slack \cite{malkov2018hnsw,fu2019nsg,subramanya2019diskann}. Query-centered geometric diverse-neighbor retrieval itself also predates the present work: \citet{kucuktunc2013lambda} introduced $\lambda$-diverse nearest-neighbor browsing for multidimensional data, using proximity-graph constructions to return neighbors that are simultaneously close to the query and mutually dispersed. Recent work instead brings diversity into \ANN\ retrieval, either by diversifying the graph search itself \cite{weber2024hnsw} or by designing data structures for diverse similarity queries under a hard diversity objective \cite{anand2025diverseann}.

	The present paper uses the lune condition at query time, anchored at the query $q$ as in Figure~\ref{fig:knn-vs-rng}, but softens the binary test into a cumulative penalty that ranks a fixed pool rather than pruning a graph or enforcing a hard constraint. The contribution is the cumulative relative-neighborhood hinge score of Definition~\ref{def:rngscore}, with its explicit nearest-neighbor limit analysis, together with an evidence-aware selection rule, evaluated across controlled RAG redundancy regimes, none of which appears in these earlier constructions.

	\section{The Relative-Neighborhood Score}
	\label{sec:method}

	In order to analyze the impact of diversification at the retrieval stage of a RAG pipeline, we use different diversifiers such as \MMR \cite{carbonell1998mmr} and \DPP \cite{kulesza2012dpp,chen2018fastdpp}. Here, we introduce the \RNGSCORE, a soft score used as a reranker and which can be tuned either to diversify retrieval or to fall back to $k$-NN as needed.

	\subsection{Setup}

	Let $V=\{v_1,\dots,v_m\}\subset \R^D$ be a finite candidate pool for a fixed query $q\in\R^D$. We assume the pool is the result of a first-stage retriever and that $m \gg k$ but remains moderate enough for pairwise reranking. Distances are denoted by $d(\cdot,\cdot)$. Throughout the paper, it is enough to assume that $d$ is a \emph{symmetric nonnegative dissimilarity} on $V\cup\{q\}$ (see Appendix \ref{app:ce-variants} for the precise definition). Candidates are ranked by increasing score, with ties broken by the deterministic sorting routine of the implementation applied identically by every method including $k$-NN.

	For a candidate $w\in V$, define its set of \emph{closer competitors}
	\[
	P(w;q,V)=\{v\in V : d(q,v) < d(q,w)\}.
	\]
	These are the only documents that can make $w$ redundant relative to the query. Because the inequality is strict, candidates at exactly equal query distance never obstruct one another: exactly coincident duplicates are each scored as if the other were absent and ranked as $k$-NN ranks them (both can be selected), so the suppression the score provides targets near-duplicates rather than verbatim copies (Section~\ref{sec:theory}). Score ties fall under the deterministic rule above: since tied candidates are in practice verbatim copies sharing identity, text and embedding, no reported value depends on their resolution.

	\subsection{Definition}

	\begin{defn}[\RNGSCORE]
		\label{def:rngscore}
		Let $\gamma\in\R$. For $w\in V$, define
		\begin{equation}
			\label{eq:rngscore}
			\score{\gamma}(w;q,V)
			= d(q,w) + \sum_{v\in P(w;q,V)} \hinge{d(q,w)-d(v,w)+\gamma},
		\end{equation}
		where $\hinge{x}=\max(x,0)$.
	\end{defn}

	The first term is simply the distance from $w$ to the query, while each summand is an \emph{obstruction penalty}. It is positive exactly when $d(v,w) < d(q,w)+\gamma$, i.e. when a closer document $v$ sits close enough to $w$ to make $w$ look redundant from the viewpoint of $q$. Positive $\gamma$ makes the penalty stricter, negative $\gamma$ relaxes it. The induced reranker keeps the $k$ lowest-scoring candidates with a computational cost of $O(m^{2}D)$: measured at $m=100$, $k=5$ and $D=1024$, the full reranking pass costs $230\,\mu$s per query on average (median $214\,\mu$s, $95$th percentile $378\,\mu$s),\footnote{\label{fn:latency-protocol}Single-query CPU timings on one pinned $3.9$\,GHz Cortex-X925 core with one OpenBLAS thread (NumPy 2.4.6): $100$ untimed warm-up iterations followed by $10^{4}$ timed calls at batch size one. The timing harness and full latency distributions are released with the code.} negligible against embedding and generation, so reranking cost is not studied further.

	\begin{figure}[t]
		\centering
		\begin{subfigure}[b]{0.29\linewidth}
			\centering
			\resizebox{\linewidth}{!}{\begin{tikzpicture}\draw[figgray!45, rounded corners=3pt, thin] (-1.75,-2.1) rectangle (3.3,2.8);\input{plots/fig_sel_knn.tex}\end{tikzpicture}}
			\subcaption{$k$-NN}
			\label{fig:sel-knn}
		\end{subfigure}\hfill
		\begin{subfigure}[b]{0.29\linewidth}
			\centering
			\resizebox{\linewidth}{!}{\begin{tikzpicture}\draw[figgray!45, rounded corners=3pt, thin] (-1.75,-2.1) rectangle (3.3,2.8);\input{plots/fig_sel_rng.tex}\end{tikzpicture}}
			\subcaption{\RNGSCORE ($\gamma=0$)}
			\label{fig:sel-rng}
		\end{subfigure}\hfill
		\begin{subfigure}[b]{0.29\linewidth}
			\centering
			\resizebox{\linewidth}{!}{\begin{tikzpicture}\draw[figgray!45, rounded corners=3pt, thin] (-1.75,-2.1) rectangle (3.3,2.8);\input{plots/fig_sel_mmr.tex}\end{tikzpicture}}
			\subcaption{\MMR\ ($\lambda=0.5$)}
			\label{fig:sel-mmr}
		\end{subfigure}
		\caption{One candidate pool, three computed top-$5$ selections (filled blue
			= selected, open = unselected, $\blacksquare = q$).
			(\subref{fig:sel-knn}) $k$-NN spends the whole budget on a tight cluster of
			near-duplicates; (\subref{fig:sel-rng}) the \RNGSCORE keeps one cluster
			representative and disperses to nearby distinct points; (\subref{fig:sel-mmr}) \MMR pushes further out onto less relevant points.}
		\label{fig:knn-vs-rng}
	\end{figure}

	Table~\ref{tab:example} shows the three selectors on a real HotpotQA fullwiki query: $k$-NN fills the list with near-identical sibling articles and misses the second gold document, the \RNGSCORE keeps one cluster representative and recovers the second document while \MMR disperses onto less relevant material.

	\begin{table}[ht]
		\centering
		\small
		\caption{A HotpotQA fullwiki query needing two pieces of evidence
			(\emph{1925 Birthday Honours} and \emph{George V}, in bold).
			}
		\label{tab:example}
		\begin{tabular}{rlll}
			\hline
			\textbf{rank} & \textbf{$k$-NN} & \textbf{\RNGSCORE ($\gamma{=}0$)} & \textbf{\MMR\ ($\lambda{=}0.5$)} \\
			\hline
			1 & \textbf{1925 Birthday Honours} & \textbf{1925 Birthday Honours} & \textbf{1925 Birthday Honours} \\
			2 & 1915 Birthday Honours  & David Lloyd George & Edward VII \\
			3 & 1924 Birthday Honours  & 1915 Birthday Honours & George Champagne \\
			4 & 1926 Birthday Honours  & Henry V, Holy Roman Emperor & Julius Meinl V \\
			5 & 1927 Birthday Honours  & \textbf{George V} & A.\ Thynn, Marquess of Bath \\
			\hline
		\end{tabular}
	\end{table}

	\subsection{Properties and instantiation}
	\label{sec:theory}

	The score deviates from pure relevance only where there is geometric obstruction and is otherwise pinned to nearest-neighbor ranking, which is what lets its tuned parameter reflect the pool rather than the tuner. In particular, it reduces to $k$-NN ranking at both extremes of the margin (Propositions~\ref{prop:knnlimit} and~\ref{prop:knnlimit_pos}), so a margin grid containing a deactivating margin ($\gamma\le\gamma_0$ in Proposition~\ref{prop:knnlimit}, under cosine distance $\gamma=-2$ always suffices) contains the relevance-only baseline as a reachable operating point (Corollary~\ref{cor:fallback}). Validated tuning over such a grid never underperforms $k$-NN on the validation set itself: this is an optimization identity and Theorem~\ref{thm:valbound} gives the corresponding finite-sample statement for unseen queries. The evaluated grids stop short of $\gamma=-2$, but that margin is $k$-NN itself, which every table reports, so the theorem applies without an additional run. The fallback carries no guarantee at interior margins: since the score consults no relevance labels, an interior $\gamma$ can reorder the selection adversely for the task, so worst-case behavior over the grid is an empirical quantity, reported in Section~\ref{sec:rq2-regret}. \MMR\ shares a one-sided version of the fallback, recovering pure relevance at $\lambda=1$, whereas Maxmin and \DPP have no operating point that recovers relevance-only ranking. The \RNGSCORE\ is distinctive in reaching the fallback from both margin extremes, which keeps the tuned margin interpretable as a pool diagnostic throughout the sweep.

	The obstruction mechanism also delimits the score's scope along the duplication axis. The hinge penalizes a candidate through strictly closer near-copies: for $v\in P(w;q,V)$ with $d(v,w)\to0$, the penalty approaches $[\,d(q,w)+\gamma\,]_+$, its largest attainable value for that pair, so near-duplicates of better-ranked content are suppressed most strongly. Exactly coincident copies, by contrast, are not competitors of one another under the strict inequality and are not suppressed at all: on byte-exact duplication the score reduces to $k$-NN behavior and its redundancy gains in the sweeps come entirely from the perturbed copies. The score therefore targets the semantic near-duplicate structure studied in this paper, while byte-exact duplication is the regime of hash-level deduplication \cite{schelpe2026dedup}, removable upstream at negligible cost.

	\label{sec:kernel}
	In many RAG pipelines, documents in the corpus are normalized and similarity between documents $v$ and $w$ is measured using cosine similarity, which we note $\langle v,w\rangle$. In this context, the corresponding distance is $d(v,w) = 1 - \langle v,w\rangle$ and the \RNGSCORE for given query $q$ and pool $V$ becomes
	\begin{equation}
		\label{eq:cosine-score}
		\score{\gamma}^{c}(w;q,V)
		= 1-\langle q,w\rangle
		+\sum_{\substack{v\in V\\ \langle q,v\rangle>\langle q,w\rangle}}
		\hinge{\langle v,w\rangle-\langle q,w\rangle+\gamma}.
	\end{equation}
	Furthermore, the \RNGSCORE is adaptable to the context where the query--document distance differs from the distance in the embedding space, as is the case when using a cross-encoder (Appendix~\ref{app:ce-variants}). We may also adapt the \RNGSCORE so that it takes as input a kernel analogously to \DPP (Appendix~\ref{app:kernel}).

	\section{Redundancy Diagnostics and the Decision Rule}
	\label{sec:diagnostics}

	The regime analysis of this paper is only actionable if the controlling variable can be observed by a deployed system cheaply and without labels. This section defines the two diagnostics we use, at the deployment level and at the query level, and the decision rule built on the latter. Both diagnostics are computed from the embedded candidate pool that the retrieval system already holds: neither consults relevance judgements, gold answers or any model beyond the existing encoder.

	\subsection{Near-duplicate pair fraction}
	\label{sec:pool-redundancy}

	For an embedded pool $V$ with unit-normalized embeddings, define
	\begin{equation}
		\label{eq:pool-redundancy}
		\operatorname{PoolRedundancy}(V)
		= \binom{|V|}{2}^{-1}
		\left|\big\lbrace\,\{v,w\}\subset V : \langle v,w\rangle > 0.95 \,\big\rbrace\right|,
	\end{equation}
	the fraction of passage pairs whose cosine similarity exceeds a near-duplicate threshold, fixed at $0.95$ throughout. Two properties of the statistic should be kept in mind when comparing across deployments: the cutoff's meaning depends on the encoder's similarity scale and the pair fraction grows quadratically with duplicate-cluster size. Thus, comparisons should hold the encoder and the pool size $m$ fixed, as all sweeps in this paper do. In our experiments $V$ is each query's top-$m$ candidate pool and we report per-condition means over queries. Averaged over a sample of production queries, the same statistic places any deployment on the common measured-redundancy axis along which we report the synthetic injection sweep and the natural chunking sweep of Section~\ref{sec:rq2}, so that a measured value can be compared directly against the crossover observed in our experiments.

	\subsection{Effective size of the nearest-neighbor selection}
	\label{sec:trigger}

	The pair fraction \eqref{eq:pool-redundancy} is a property of the whole pool. Nevertheless, what determines whether diversification helps for a specific query is whether redundancy crowds the top of that query's ranking. We therefore use a sharper, selection-centered statistic. We run plain nearest-neighbor selection and measure the \emph{effective number of distinct documents} among the $k$ selected, the Vendi score \cite{friedman2023vendi} of the embedding similarity matrix of the selection:
	\begin{equation}
		\label{eq:trigger}
		T(q) = \operatorname{Vendi}\bigl(k\text{-}\mathrm{NN}(q)\bigr)
		\;\in\; [1, k].
	\end{equation}
	$T(q)\approx k$ means the $k$ selected documents are mutually orthogonal, while $T(q)\approx 1$ means the selection has collapsed onto near-copies of a single document. Following the semantics of the Vendi score \cite{friedman2023vendi}, the effective number of distinct documents measures how many geometrically distinct embeddings the selection effectively contains, not how many semantically distinct evidence pieces it covers; how well this geometric quantity serves as a per-query regime diagnostic is answered empirically in Section~\ref{sec:rq4}. The statistic costs one $k\times k$ eigendecomposition on embeddings already in memory: under the measurement protocol of footnote~\ref{fn:latency-protocol}, the full trigger, kernel construction included, costs $20.5\,\mu$s per query on average at $k=5$ and $D=1024$ (median $20.3\,\mu$s, $95$th percentile $25.8\,\mu$s).

	\subsection{The decision rule}
	\label{sec:rule}

	The decision rule is a one-line policy with two frozen choices, a threshold $\tau\in[1,k]$ and a fallback diversifier $D$:
	\begin{equation}
		\label{eq:rule}
		\text{select}(q) =
		\begin{cases}
			D(q) & \text{if } T(q) < \tau,\\[2pt]
			k\text{-}\mathrm{NN}(q) & \text{otherwise.}
		\end{cases}
	\end{equation}
	The threshold is set to the budgeted evidence requirement, $\tau=\min\{h,k\}$, where $h$ is the number of distinct evidence pieces a query needs: since $T(q)\in[1,k]$, the cap keeps the threshold inside the range of the statistic. Two supporting documents give $h=2$ on the multi-hop benchmarks, while $h=1$ makes the trigger condition $T(q)<1$ unsatisfiable, so on single-hop retrieval the rule reduces to $k$-NN by construction. Evidence for the diagnostic itself accordingly comes from the interior cases between this endpoint and the saturated one where the rule nearly always triggers. Where $h$ is read from gold annotations (hop counts, relevant-set sizes), we label the results oracle analyses of the rule under a known evidence requirement, the natural upper reference for a deployed policy that estimates $h$ from its workload without consulting test labels. On the graded-relevance tasks, relevant-document count serves as a proxy for evidence multiplicity, since several relevant documents can be substitutes for one fact rather than complementary facts. The fallback diversifier $D$ is selected on a validation set pooled across redundancy conditions and then frozen.

	The rule is a realizable member of the family of query-adaptive policies that the all-methods oracle of Section~\ref{sec:rq3} bounds from above, which makes the comparison meaningful in both directions: the oracle says how much any policy routing among the evaluated methods could gain, while the rule says how much of that a single-threshold policy actually captures. Section~\ref{sec:rq4} confirms the anchoring empirically: threshold tuning left free recovers $\tau^{*}\approx h$ at both $k=5$ and $k=10$, with a wide plateau of equivalent thresholds around it.

	Note that the paper thus carries two thresholds with different roles. The near-duplicate pair fraction is a statistic measured on a pool of $m$ documents while the rule threshold $\tau$ lives on the individual query: it tests the selection of $k\ll m$ documents the system has just made. The two are linked through the rule trigger rate: as a pool's measured redundancy passes a certain crossover (see for example Figure~\ref{fig:regimes-summary}), the per-query statistic $T(q)$ collapses for most queries and the fraction of queries falling below $\tau$ jumps from under $16\%$ to over $80\%$ (Section~\ref{sec:rq4}). Thus, thresholding each query at $\tau$ implements the pool-level boundary automatically without ever measuring the pool redundancy. The validation-tuned \RNGSCORE margin of Section~\ref{sec:method} supplies a third complementary readout, offline and at the workload level since validation queries are needed to compute it.

	\section{Experimental Setup}
	\label{sec:setup}

	This section details the research questions, datasets, models, baselines, evaluation metrics and protocols. The results follow in Section~\ref{sec:results}.

	\subsection{Research questions}
	\label{sec:rqs}

	The evaluation is organized around five research questions.
	\begin{enumerate}[label=\textbf{RQ\arabic{enumi}}]
		\item\label{rq:benchmarks} \emph{Diversification on standard benchmarks.} On pools retrieved from nonredundant corpora with state-of-the-art encoders, does any geometric diversifier improve results quality over nearest-neighbor selection?
		\item\label{rq:redundancy} \emph{Manipulating pool redundancy.} Does raising pool redundancy create a regime in which diversification pays and what is the risk of committing to a method without knowing the regime?
		\item\label{rq:oracle} \emph{Oracle headroom.} How much could a query-adaptive policy routing among the evaluated diversifiers gain in each regime?
		\item\label{rq:rule} \emph{The decision rule.} Can the diagnostic of Section~\ref{sec:trigger} decide per query whether to diversify and how much of the oracle headroom of \ref{rq:oracle} does the rule capture?
		\item\label{rq:answerquality} \emph{Answer quality under redundancy.} Do the findings carry over to end-to-end generation quality in a redundant regime?
	\end{enumerate}

	\subsection{Datasets}
	\label{sec:datasets}

	\emph{HotpotQA} (fullwiki) \cite{yang2018hotpotqa} is the primary multi-hop benchmark: each question needs two supporting paragraphs from distinct articles and candidate pools are the unique passages of the top-100 MDR chains of \citet{xiong2021mdr} (between 80--120 candidates per query, two gold passages). \emph{2WikiMultiHopQA} \cite{ho2020wikimultihop} and \emph{MuSiQue} \cite{trivedi2022musique} (two to four hops) vary hop count at fixed credit and redundancy regimes. Candidate pools for 2WikiMultiHopQA are the ten paragraphs of its distractor setting, the supporting paragraphs plus distractors. MuSiQue, restricted to its answerable subset, provides roughly twenty paragraphs per question (two to four of them supporting), which form its pools. \emph{Natural Questions Open} \cite{kwiatkowski2019nq} with DPR top-100 pools \cite{karpukhin2020dpr} is a single-hop, single-answer control. For the redundancy sweep, we collapse a query's answer-containing passages to one subtopic, since DPR marks every such passage relevant (around $4.6$ per query) and would otherwise misread as multi-evidence. \emph{BEIR} \cite{thakur2021beir} supplies heterogeneous graded-relevance tasks (SciFact, FiQA-2018 and TREC-COVID together with the argument-retrieval tasks ArguAna and Webis-Touch\'e-2020), with each query's relevant set as its subtopics. SciFact anchors the single-evidence end ($92\%$ of queries one relevant document), FiQA-2018 the multi-evidence middle (typically two to three relevant documents per query) and TREC-COVID the saturated extreme (mean $493$).

	\subsection{Models}
	\label{sec:models}

	Because the analysis applies to any finite candidate pool, we evaluate along an encoder-capacity axis. The strong stack uses \texttt{bge-m3} \cite{Chen2024} (1024-dimensional) and \texttt{Qwen3-Embedding-4B} \cite{Zhang2025Qwen3} (2560-dimensional) as dense encoders and \texttt{bge-reranker-v2-m3}, the reranker released alongside BGE-M3 \cite{Chen2024}, as the cross-encoder for the multi-stage half of \ref{rq:benchmarks}. The weak encoder \texttt{all-MiniLM-L6-v2} \cite{reimers2019sbert}, representative of lightweight deployments, supplies a capacity contrast for \ref{rq:redundancy}. Encoder capacity is treated as a third route into the redundant regime, not as a robustness check. Downstream generation uses \texttt{flan-t5-base} \cite{chung2022scaling} as the reader (4-beam decoding); the EM/F1 columns of Table~\ref{tab:rq1-hotpot} and the clean- and heavy-redundancy comparison of Section~\ref{sec:rq5} are instead computed with the instruction-tuned \texttt{Qwen3.8-27B} \cite{qwenteam2026qwen38} under deterministic decoding.\footnote{Every \texttt{Qwen3.8-27B} run uses one fixed prompt: the instruction ``Answer the question based on the context. Give only a short, direct answer.'' followed by the question and the $k$ selected passages, title-prefixed and presented in retrieval order, rendered through the model's native chat template with reasoning disabled. Decoding is deterministic: temperature $0$, top-$p$ $1$, seed $0$, at most $64$ new tokens, no beam search, within an $8{,}192$-token context. The model runs in BF16 at a pinned weight revision; the serving configuration is recorded with the released code.}

	\subsection{Baselines}
	\label{sec:baselines}

	We compare $k$-nearest-neighbor selection ($k$-NN) against four reranking baselines applied to the same candidate pool: \MMR \cite{carbonell1998mmr} with trade-off weight $\lambda\in\{0.3,0.5,0.7\}$, Maxmin dispersion \cite{ravi1994maxmin}, greedy \DPP selection \cite{kulesza2012dpp,chen2018fastdpp} and the \RNGSCORE with margin grid $\gamma\in\{-0.5,-0.3,-0.1,-0.05,0,0.05,0.1,0.3,0.5\}$.\footnote{These are the default grids: the redundancy and chunking sweeps refine both and reach further at their upper ends (Section~\ref{sec:protocol}), and the single-run analyses (the non-HotpotQA cross-encoder comparisons of Table~\ref{tab:rq2-ce} and the ArguAna and Webis-Touch\'e columns of Table~\ref{tab:rq1-beir}) use a fine margin grid, $\gamma\in[-1,1.25]$ in steps of $0.05$.} Maxmin is query-agnostic and is included as a diversity stress-test rather than as a model of relevance-aware reranking. Greedy \DPP\ uses the quality-weighted kernel $L=\operatorname{diag}(s)\,C\,\operatorname{diag}(s)$ ($C$ the cosine Gram matrix of the pool, $s$ the query similarities clipped at zero) with greedy MAP inference by incremental orthogonalization \cite{chen2018fastdpp}. We refer to \MMR, Maxmin and greedy \DPP\ collectively as \emph{fixed-pressure} diversifiers: each applies a diversification pressure set by its parameters, with no operating point in the evaluated grids that recovers relevance-only ranking. \MMR\ does recover it at the endpoint $\lambda=1$, where its selection criterion reduces to relevance alone and reproduces the $k$-NN ranking exactly under the shared tie rule. The $k$-NN rows of the tables therefore already report this endpoint and the grids list only interior operating points, so the tuned \MMR\ of the sweeps is the best \emph{interior} one and its clean-pool toll and sign change (Section~\ref{sec:rq2}) concern interior diversification; an \MMR\ tuned over a grid containing $\lambda=1$ would choose between that row and $k$-NN on validation, a policy that the per-level tuned row of Table~\ref{tab:rq6-rule} bounds on validation. Within the evaluated grids, where $\lambda=1$ is represented by the $k$-NN row, the \RNGSCORE is the only method whose tuning can express ``do not diversify''.

	\subsection{Evaluation metrics}
	\label{sec:metrics-eval}

	We report four metric families so that relevance, raw spread, genuine coverage and downstream utility can be distinguished. \emph{Relevance}: Recall@$k$ and NDCG@$k$ against gold passages or graded judgements. \emph{Annotation-free diversity}: average pairwise cosine distance (APD) and the Vendi score \cite{friedman2023vendi}. \emph{Intent-aware diversity}: $\alpha$-NDCG@$k$ \cite{clarke2008alphaNDCG} with redundancy parameter $\alpha=0.5$ and Subtopic Recall (S-Recall@$k$) \cite{zhai2015subtopicrecall}, with subtopics derived as in Section~\ref{sec:datasets}. \emph{Generation quality}: exact match (EM) and token-level F1 against reference answers. All relevance, intent-aware and generation metrics are reported on a $0$--$100$ scale, whereas APD and the Vendi score are reported on their natural scales. Recall@$k$ counts all gold passages of a query, including the roughly $5\%$ that the first-stage pool fails to contain while S-Recall@$k$ counts retrievable subtopics only, which is why the two differ at the baseline.

	\emph{Misspecification downside.} If the value of diversification depends on a regime variable that practitioners do not usually measure, then the practically relevant property of a method is not its tuned performance in either regime but its worst case across regimes. Writing $U_r(M_\theta)$ for the test utility (S-Recall@5) of method $M$ at operating point $\theta$ in regime $r$, we report
	\begin{equation}
		\label{eq:regret}
		\operatorname{downside}(M)
		\;=\;
		\max_{r}\;\max_{\theta}\;
		\bigl[\,U_r(k\text{-}\mathrm{NN}) - U_r(M_\theta)\,\bigr]_{+},
	\end{equation}
	the worst shortfall against the relevance-only default, over distinct redundancy regimes and over the method's full hyperparameter grid (the sweep grids of Section~\ref{sec:protocol}). The quantity is a maximum over the grid, so the values computed on the sweep grids upper-bound those of the narrower default grids (whose only member absent from the sweep grids, $\gamma=-0.5$, reproduces $k$-NN) and it is intrinsically grid-dependent: enlarging a grid with a poor operating point increases it. For $k$-NN, which commits no operating point, the reference is instead the best tuned method in each regime: that row is therefore constructed differently from the others.

	\emph{Minimax regret.} The downside is anchored at the $k$-NN default. Its standard decision-theoretic complement \cite{savage1951decision} measures every method against the same per-regime oracle. For the common action library $\mathcal{C}$ consisting of $k$-NN and every evaluated operating point of the compared methods, let $U_r^{*}=\max_{a\in\mathcal{C}}U_r(a)$ be the oracle utility in regime $r$. A method that must commit one operating point $\theta$ from its grid $\Theta_M$ before the regime is known has minimax regret
	\begin{equation}
		\label{eq:minimax}
		\operatorname{regret}(M)
		\;=\;
		\min_{\theta\in\Theta_M}\;\max_{r}\;
		\bigl\{\,U_r^{*} - U_r(M_\theta)\,\bigr\},
	\end{equation}
	with $\operatorname{regret}(k\text{-}\mathrm{NN})=\max_r\{U_r^{*}-U_r(k\text{-}\mathrm{NN})\}$ for the parameter-free default. Here, every row uses the same $U_r^{*}$.

	The two quantities answer different questions, namely the worst shortfall against the $k$-NN default a practitioner would otherwise run versus the worst distance from the best fixed commitment in hindsight. Both are grid-dependent and reported in Table~\ref{tab:rq3-regret}.

	\subsection{Protocol}
	\label{sec:protocol}

	For each query, a first-stage retriever produces a candidate pool of size $m$ and every reranking method selects $k$ documents from the same pool, so that all differences are attributable to the selection step. Unless stated otherwise, $m=100$ and $k=5$ for QA tasks while $m=100$ and $k=10$ for BEIR tasks. Hyperparameters ($\gamma$ for the relative-neighborhood scores, $\lambda$ for \MMR) are selected on a held-out validation split of $20\%$ of the queries and reported on the remaining test split.\footnote{In practice the negative end of the default margin grid realizes the fallback of Corollary~\ref{cor:fallback}: at $\gamma=-0.5$ the selected sets are identical to $k$-NN's on every query of the seed-replicated BEIR tasks and match $k$-NN to reporting precision on the clean HotpotQA pools, so validated tuning can express ``do not diversify''. The sweep grid's $-0.3$ end deviates from $k$-NN on about $1\%$ of clean queries.} Unless identified as single-run, every sweep and comparison table is repeated under three seeds, the seed controlling both the injected near-duplicates and the validation/test split. On HotpotQA, each split is $1{,}481$ validation and $5{,}924$ test queries. The same frozen splits underlie \ref{rq:benchmarks} and \ref{rq:redundancy}--\ref{rq:rule}, so that all derived analyses are computed from one set of per-query results. Sweeps execute every query, but hyperparameters are selected on the validation split alone and every reported summary and inferential statement is computed exclusively on the per-seed test split.

	\paragraph{Statistical reporting.}
	 Unless stated otherwise, every reported value is a mean over the three seeds, and every comparison is a paired difference against $k$-NN on the same pools, evaluated on test queries only. Uncertainty is quantified with query-level bootstrap confidence intervals, and ``significant'' means a multiplicity-adjusted $p$-value below $0.05$. Quoted paired effects and intervals are computed from unrounded per-query means, in some cases under a query-level aggregation where the corresponding table displays seed means, so the difference of two displayed table cells can deviate from the quoted paired effect in the last displayed digit. The primary test of each sweep asks whether a method's paired effect improves from the clean to the most redundant condition. Effects within one point on the $0$--$100$ metric scales are treated as immaterial regardless of significance, while a nonsignificant difference is never read as evidence of equivalence. Single-run and aggregate-only results are identified as such and reported descriptively. The complete procedure, its provenance and its limits are given in Appendix~\ref{app:stats}.

	\paragraph{Cross-encoder pipeline.}
	 The multi-stage experiments retrieve the top-$m$ passages with \texttt{bge-m3}, rescore them with \texttt{bge-reranker-v2-m3} and hand the rescored pool to the rerankers. Because the \RNGSCORE is defined through a distance while a cross-encoder returns only query--document relevance, these experiments use the dissimilarity construction of Appendix~\ref{app:ce-variants}, with cross-encoder relevance as the query--document distance and embedding distance between documents.

	\paragraph{Manipulating and measuring redundancy.}
	 Injection adds to each pool a fraction $\rho$ of sentence-shuffled near-duplicate copies that retain their source's identity (so coverage metrics count them as redundant, not as new evidence) drawn half from gold passages and half uniformly. Sampling half the copies from gold passages is a deliberately adversarial choice: it constructs exactly the mechanism by which duplicates of highly relevant material crowd out complementary evidence, so the injection sweeps are read as controlled stress tests of that mechanism rather than as estimates of the average effect of naturally occurring redundancy. The chunk-overlap sweeps below provide the naturalistic counterpart. The perturbation permutes the sentence units of a punctuation-based splitter without deleting or editing text, so each copy preserves its source verbatim up to unit order (a boundary inside an abbreviation is the one case that separates a multi-word span); the identity-based coverage and relevance metrics are unaffected. Pools are re-embedded after injection and the candidate pool size is held fixed throughout: each query's candidate count is frozen at its clean value $\min\{m,\text{clean pool size}\}$ before any manipulation. At every level and seed, the post-truncation candidate count equals this frozen target so that redundancy is never confounded with pool size.\footnote{The originally executed sweeps truncated to $\min\{m,\text{transformed pool size}\}$, which can enlarge the pool when a query's clean pool is smaller than $m$. All thirteen affected sweeps (twelve run configurations, SciFact holding both an injection and a chunking sweep) were re-executed under the fixed-size policy, and every query, level and seed passed the assertion. The sweep-level method effects and interactions of Section~\ref{sec:rq2} are computed from these fixed-pool re-executions; transfer, oracle, regret and generation inference uses the originally executed runs (Appendix~\ref{app:stats}).}

	 Because copies share an identity, NDCG credits only the first retrieved copy and hyperparameters are tuned on S-Recall@$k$ (the $\alpha=0.5$ gain of $\alpha$-NDCG@$k$ would otherwise reward repeated copies). Every injection and chunk-overlap sweep uses the grids $\lambda\in\{0.3,0.5,0.7,0.9\}$ and $\gamma\in\{-0.3,\allowbreak -0.2,\allowbreak -0.1,\allowbreak -0.05,\allowbreak 0,\allowbreak 0.05,\allowbreak 0.1,\allowbreak 0.2,\allowbreak 0.3,\allowbreak 0.5,\allowbreak 1\}$, which refine the default grids of Section~\ref{sec:baselines} and extend their upper ends ($\lambda=0.9$, margins up to $1$) after validation-selected operating points of early HotpotQA sweeps sat on a boundary. All tuning is re-derived over these fixed grids from the frozen validation splits, so no reported statistic depends on when the extension was decided. The headline sweep refines the grid where the regime turns, $\rho\in\{0,0.025,0.05,0.1,0.15,0.25,0.5,1\}$, on HotpotQA with \texttt{bge-m3} (all $7{,}405$ queries), replicated with \texttt{Qwen3-Embedding-4B} and \texttt{all-MiniLM-L6-v2} (encoder-capacity axis) and on SciFact and FiQA-2018 ($k=10$), the single- and multi-evidence ends of the BEIR tasks. Overlapping chunking instead re-chunks each pool passage into $60$-word sliding windows at overlap $\in\{0,0.25,0.5,0.75\}$, inheriting source identity with no text perturbed, the mechanism of production ingestion pipelines. The same frozen per-query pool size target applies. A selected window inherits its source's gold identity even when the $60$-word window omits the answer-bearing span, an optimistic convention for coverage metrics. Both protocols report the measured redundancy \eqref{eq:pool-redundancy}, placing synthetic and natural sweeps on one axis.

	\paragraph{Rule and oracle.}
	 The decision rule \eqref{eq:rule} fixes $\tau=\min\{h,k\}$, where $h$ is the number of evidence pieces a given query requires, and selects the fallback $D^{*}$ by maximizing mean S-Recall@$5$ on validation queries pooled across all injection levels (never observing the level identity). As a check on the anchoring, the threshold is additionally tuned freely over $[1,k]$ jointly with $D$. The evidence requirement is $h=2$ on the two-hop datasets, the per-query hop count on MuSiQue, $h=1$ on the single-answer datasets and the query's relevant-set size on the graded-relevance BEIR tasks. On TREC-COVID the uncapped $h$ exceeds $k$ for essentially every query, so there the budgeted rule stays close to always-diversify and does not exercise the diagnostic. Because $h$ is read from gold annotations, all gold-$h$ rows are oracle analyses in the sense of Section~\ref{sec:rule}. The frozen rule is then evaluated against always-$k$-NN, always-$D^{*}$, the per-level tuned best method and the all-methods oracle. The per-level tuning selects, on each injection level's validation split, the best method and operating point for that level. Both oracles operate per query with ground-truth access: the per-method oracle \citep{khan2026dfrag} sweeps one method's grid and records the best value achievable, and the all-methods oracle additionally chooses the method, upper-bounding any test-time policy that routes among the evaluated methods and operating points. A new selector, a changed candidate pool, query decomposition or a jointly trained retriever is not bounded by it. A query counts as improved when the oracle's S-Recall strictly exceeds $k$-NN's (library in Appendix~\ref{app:stats}).

	\paragraph{Generation.}
	 Generation experiments attach the reader to the sweep, generating an answer from each method's top-$k$ at every level and scoring it with the generation metrics of Section~\ref{sec:metrics-eval}. The decision rule's answers are reconstructed per query without extra generation, returning the diversifier's answer when the trigger fires and the $k$-NN answer otherwise.

	\section{Results}
	\label{sec:results}

	\subsection{RQ1: Diversification on standard benchmarks}
	\label{sec:rq1}

	Table~\ref{tab:rq1-hotpot} reports controlled reranking on HotpotQA with \texttt{bge-m3} (test splits, three seeds, paired inference against $k$-NN). Every classical diversifier trades relevance for spread, losing between $0.7$ and $33.0$ S-Recall@5 points, significantly for each method (even the mildest loss, \MMR's at $\lambda=0.9$, carries the $95\%$ CI $[-1.05,-0.54]$). Crucially, the intent-aware metrics fall with relevance as APD and Vendi rise. The tuned \RNGSCORE sits on top of $k$-NN by $+0.2$ S-Recall points ($95\%$ CI $[0.04,0.34]$), not significant after Holm adjustment and inside the materiality tolerance either way. On these pools, the correct amount of diversification is approximately none.

	\begin{table}[ht]
		\centering
		\small
		\caption{Controlled reranking on HotpotQA (MDR top-100 pools, \texttt{bge-m3}, $m=100$, $k=5$; clean level of the sweep of Section~\ref{sec:rq2}, on its grids). The four left metrics are objectives (best in bold, second best underlined); APD and Vendi are descriptive spread statistics, not objectives; EM and F1: one deterministic \texttt{Qwen3.8-27B} run, seed-$0$ test split ($5{,}924$ queries).}
		\label{tab:rq1-hotpot}
		\setlength{\tabcolsep}{4pt}
		\begin{tabular}{lcccc|cc|cc}
			\hline
			\textbf{Method} & \textbf{Recall@5} & \textbf{NDCG@5} & \textbf{$\alpha$-NDCG@5} & \textbf{S-Recall@5} & \textbf{APD} & \textbf{Vendi} & \textbf{EM} & \textbf{F1} \\
			\hline
			$k$-NN          & \underline{74.8} & \underline{73.9} & \underline{76.5} & \underline{78.7} & 0.450 & 3.08 & \underline{50.8} & \textbf{63.6} \\
			\MMR ($\lambda=0.3$) & 45.3 & 53.3 & 55.4 & 48.1 & 0.630 & 4.00 & 36.4 & 47.8 \\
			\MMR ($\lambda=0.5$) & 55.6 & 60.3 & 62.5 & 58.6 & 0.574 & 3.73 & 41.6 & 53.5 \\
			\MMR ($\lambda=0.7$) & 69.2 & 70.0 & 72.5 & 72.8 & 0.505 & 3.38 & 47.6 & 59.9 \\
			\MMR ($\lambda=0.9$) & 74.1 & 73.5 & 76.1 & 78.0 & 0.463 & 3.16 & 50.3 & 62.9 \\
			Maxmin       & 42.9 & 51.7 & 53.8 & 45.7 & 0.625 & 3.85 & 36.0 & 47.4 \\
			Greedy \DPP  & 65.6 & 68.0 & 70.5 & 69.1 & 0.522 & 3.47 & 45.9 & 58.0 \\
			\RNGSCORE ($\gamma^{*}=0.2$)   & \textbf{75.1} & \textbf{74.2} & \textbf{76.8} & \textbf{79.0} & 0.462 & 3.14 & \textbf{50.9} & \textbf{63.6} \\
			\hline
		\end{tabular}
	\end{table}

	The retrieval losses propagate to generated answers (EM/F1 columns of Table~\ref{tab:rq1-hotpot}). On the clean HotpotQA pools, every fixed-pressure diversifier loses EM against $k$-NN, while the tuned \RNGSCORE sits $0.1$ EM points above $k$-NN. Section~\ref{sec:rq5} carries this into the redundant regime, where the picture inverts.

	The single-stage behavior of 2WikiMultiHopQA, MuSiQue and the NQ-Open single-answer control tracks their measured pool redundancy (Appendix~\ref{app:single-stage}): 2WikiMultiHopQA, whose natural pools measure below the crossover of Section~\ref{sec:rq2}, reproduces these conclusions, while MuSiQue's distractor pools measure at the crossover fraction and mild diversification pays about two S-Recall points. Under the cross-encoder pipelines below, all four datasets lead to the same conclusion, i.e. diversification hurts and top-$k$ selection wins.

	Table~\ref{tab:rq1-beir} extends the study to five BEIR tasks. The conclusion is uniform: every fixed-pressure diversifier loses on both NDCG@10 and $\alpha$-NDCG@10 metrics. The tuned \RNGSCORE is the exception on the three seed-replicated tasks, where validation drives its margin into or near the fallback region and leaves it within one point of $k$-NN.

	\begin{table}[ht]
		\centering
		\small
		\caption{NDCG@10 / $\alpha$-NDCG@10 on five BEIR tasks (\texttt{bge-m3}, $k=10$; \MMR\ at $\lambda=0.5$; best in bold, second best underlined). SciFact, FiQA and TREC-COVID are seed-replicated; conclusions on these small query sets are exploratory (Appendix~\ref{app:stats}). ArguAna and Touch\'e are single-run.}
		\label{tab:rq1-beir}
		\begin{tabular}{lccccc}
			\hline
			\textbf{Method} & \textbf{SciFact} & \textbf{FiQA} & \textbf{TREC-COVID} & \textbf{ArguAna} & \textbf{Touch\'e} \\
			\hline
			$k$-NN          & \textbf{64.1}/\textbf{64.1} & \textbf{41.7}/\underline{46.7} & \textbf{66.1}/\textbf{70.2} & \textbf{38.3}/\textbf{38.3} & \textbf{25.2}/\textbf{27.8} \\
			\MMR         & 52.9/52.9 & 29.6/33.0 & 37.0/40.3 & 0.2/0.2 & 10.9/12.9 \\
			Maxmin       & 49.4/49.4 & 27.2/30.2 & 46.7/52.0 & 0.9/0.9 & 9.2/11.7 \\
			Greedy \DPP  & 56.0/56.0 & 30.9/34.4 & 39.8/43.8 & 1.0/1.0 & 11.0/13.2 \\
			\RNGSCORE    & \underline{63.4}/\underline{63.5} & \textbf{41.7}/\textbf{46.8} & \underline{65.6}/\underline{69.6} & \underline{26.4}/\underline{26.4} & \underline{17.8}/\underline{19.6} \\
			\hline
		\end{tabular}
	\end{table}

	\paragraph{After a strong cross-encoder.}
	\label{sec:rq1-crossencoder}
	Production pipelines rarely diversify a raw dense pool, they usually rerank first. We therefore repeat the test after the cross-encoder pipeline of Section~\ref{sec:protocol}. Table~\ref{tab:rq2-ce} reports the result on four datasets.

	\begin{table}[ht]
		\centering
		\small
		\caption{Cross-encoder pipelines, $k=5$ (\MMR\ at its best $\lambda=0.7$; CE+\RNGSCORE\ is the tuned dissimilarity variant of Appendix~\ref{app:ce-variants}; the non-HotpotQA datasets are single-run). Best per dataset and column in bold; a tie occupies both ranks, so cells below a tied best are not underlined.}
		\label{tab:rq2-ce}
		\begin{tabular}{llcccc}
			\hline
			\textbf{Dataset} & \textbf{Method (post-CE)} & \textbf{Recall@5} & \textbf{NDCG@5} & \textbf{EM} & \textbf{F1} \\
			\hline
			\multirow{4}{*}{HotpotQA}
			& CE top-$k$        & \textbf{89.6} & \textbf{87.9} & \textbf{41.0} & \textbf{53.3} \\
			& CE+\MMR           & 84.6 & 83.4 & 38.2 & 50.0 \\
			& CE+\DPP           & 80.0 & 80.8 & 36.2 & 47.7 \\
			& CE+\RNGSCORE      & \textbf{89.6} & \textbf{87.9} & \textbf{41.0} & \textbf{53.3} \\
			\hline
			\multirow{4}{*}{2WikiMultiHopQA}
			& CE top-$k$        & \textbf{90.6} & \textbf{90.3} & \textbf{31.3} & \textbf{37.9} \\
			& CE+\MMR           & 71.0 & 76.3 & 25.0 & 29.6 \\
			& CE+\DPP           & 71.2 & 76.5 & 25.2 & 29.9 \\
			& CE+\RNGSCORE      & \textbf{90.6} & \textbf{90.3} & \textbf{31.3} & \textbf{37.9} \\
			\hline
			\multirow{4}{*}{MuSiQue}
			& CE top-$k$        & \underline{74.3} & \textbf{77.6} & \underline{26.0} & \underline{34.1} \\
			& CE+\MMR           & 52.7 & 61.0 & 18.5 & 25.6 \\
			& CE+\DPP           & 53.1 & 61.6 & 19.8 & 26.7 \\
			& CE+\RNGSCORE      & \textbf{74.4} & \textbf{77.6} & \textbf{26.3} & \textbf{34.3} \\
			\hline
			\multirow{4}{*}{NQ-Open}
			& CE top-$k$        & \textbf{23.5} & \textbf{65.8} & \textbf{33.0} & \textbf{43.4} \\
			& CE+\MMR           & 16.3 & 51.9 & 29.7 & 40.0 \\
			& CE+\DPP           & 13.3 & 45.3 & 29.8 & 39.7 \\
			& CE+\RNGSCORE      & \textbf{23.5} & \textbf{65.8} & \textbf{33.0} & \textbf{43.4} \\
			\hline
		\end{tabular}
	\end{table}

	The cross-encoder only strengthens the negative result: CE top-$k$ wins every comparison with the fixed-pressure diversifiers. The tuned \RNGSCORE tracks the cross-encoder ranking within a fraction of a point on every dataset, sitting at the CE top-$k$ values and nominally above them on MuSiQue: validation lands its margin at or near one of the two fallback limits throughout (per-dataset margins in Appendix~\ref{app:ce-variants}). In every experiment of this subsection the annotation-free and intent-aware diversity metrics dissociate, diversifiers raising APD and Vendi while lowering $\alpha$-NDCG and S-Recall.

	\subsection{RQ2: Manipulating pool redundancy}
	\label{sec:rq2}

	\paragraph{Misspecification downside.}
	\label{sec:rq2-regret}
	Table~\ref{tab:rq3-regret} reports, for each method, the best and worst S-Recall@5 over its hyperparameter grid on clean ($\rho=0$) and heavily redundant ($\rho=1$) pools and the resulting misspecification downside \eqref{eq:regret}.

	\begin{table}[ht]
		\centering
		\small
		\caption{Misspecification downside \eqref{eq:regret} and common-oracle minimax regret \eqref{eq:minimax} on HotpotQA, with the best and worst S-Recall@5 over each method's grid on clean and heavily redundant pools (the $k$-NN row of the downside column is constructed against the best tuned method per regime and is not directly comparable to the others). Lower is better in the last two columns. Query-bootstrap $95\%$ intervals (computed, like the regret column, from query-level test means, hence a few tenths off the displayed seed-mean cells): downside \MMR\ $[30.2,31.5]$, Maxmin $[32.6,34.0]$, \DPP\ $[9.2,10.3]$, \RNGSCORE\ $[3.5,4.4]$; regret $k$-NN $[21.6,22.8]$, \MMR\ $[5.9,6.8]$, Maxmin $[32.9,34.3]$, \DPP\ $[9.6,10.6]$, \RNGSCORE\ $[8.9,10.0]$. Best per column in bold, second best underlined.}
		\label{tab:rq3-regret}
		\begin{tabular}{lcc|cc|cc}
			\hline
			& \multicolumn{2}{c|}{$\rho=0$ (clean)} & \multicolumn{2}{c|}{$\rho=1$ (redundant)} & & \\
			\textbf{Method} & \textbf{best} & \textbf{worst} & \textbf{best} & \textbf{worst} & \textbf{max.\ downside} & \textbf{minimax regret} \\
			\hline
			$k$-NN          & \multicolumn{2}{c|}{\underline{78.7}} & \multicolumn{2}{c|}{52.6} & 22.0 & 22.2 \\
			\MMR         & 78.0 & 48.1 & \textbf{74.6} & 53.5 & 30.6 & \textbf{6.3} \\
			Maxmin       & 45.7 & 45.7 & 50.6 & 50.6 & 33.0 & 33.6 \\
			Greedy \DPP  & 69.1 & 69.1 & \underline{72.1} & 72.1 & \underline{9.6} & 10.0 \\
			\RNGSCORE    & \textbf{79.0} & 74.8 & 65.2 & 52.8 & \textbf{3.9} & \underline{9.5} \\
			\hline
		\end{tabular}
	\end{table}

	Three facts stand out. First, every fixed commitment carries regime risk, including the default: always-$k$-NN gives up $22.0$ points in the redundant regime, the mirror image of \MMR's worst clean-pool toll ($30.6$ points). Second, the harm range of the fixed-pressure methods is set entirely by their operating point: against the $k$-NN rows of Tables~\ref{tab:rq1-hotpot} and~\ref{tab:rq3-regret}, the $\lambda=0.7$ that is optimal at $\rho=1$ costs $5.9$ points at $\rho=0$ while the $\lambda=0.3$ value that enforces a more diverse retrieval costs $30.6$. Third, the relative-neighborhood family is the only method whose worst grid point never falls more than $3.9$ points below the baseline in either regime. The two summary columns answer different questions and order the methods differently: anchored at the $k$-NN default, the downside favors the \RNGSCORE, whose fallback keeps every operating point close to the baseline; anchored at the per-regime oracle, the minimax regret favors the tuned \MMR\ ($6.3$ points, $95\%$ CI $[5.9,6.8]$, at $\lambda=0.7$), which comes closest to the optimum past the crossover, with the \RNGSCORE\ second ($9.5$ points $[8.9,10.0]$). Both remain grid-dependent (intervals in the caption of Table~\ref{tab:rq3-regret}).

	\paragraph{Injected redundancy.}

	Table~\ref{tab:rq4-redundancy} reports the sweep on HotpotQA with \texttt{bge-m3} (three seeds, all $7{,}405$ queries executed, test splits reported, operating points tuned per $\rho$). At $\rho=0$ the conclusions of Section~\ref{sec:rq1} are reproduced. Injection then reverses the ordering. The paired effect of the tuned \MMR against $k$-NN changes sign between the measured grid levels $\rho=0.05$ and $\rho=0.1$: $-0.02$ points ($95\%$ CI $[-0.26,0.23]$) at $\rho=0.05$ and $+1.83$ points ($[1.37,2.31]$) at $\rho=0.1$, bracketing the crossover between measured near-duplicate fractions $0.0012$ and $0.0030$. Past the crossover, $k$-NN S-Recall falls steeply as the Vendi score of its selection collapses from $3.1$ toward $1.2$, both tuned diversifiers improve on $k$-NN by wide and significant margins (at $\rho=1$, $+22.2$ points $[21.6,22.8]$ for \MMR$^{*}$ and $+12.7$ points $[12.2,13.2]$ for the \RNGSCORE) and parameter-free \DPP joins them from $\rho=0.15$ (query-agnostic Maxmin remains below the baseline throughout).

	\begin{table}[ht]
		\centering
		\small
		\caption{Redundancy-injection sweep, HotpotQA (\texttt{bge-m3}, $k=5$); S-Recall@5, operating points tuned per $\rho$. Second header row: measured pool redundancy \eqref{eq:pool-redundancy}. Best per column in bold, second best underlined.}
		\label{tab:rq4-redundancy}
		\setlength{\tabcolsep}{3pt}
		\begin{tabular}{lcccccccc}
			\hline
			\textbf{Method} & $\rho{=}0$ & $0.025$ & $0.05$ & $0.1$ & $0.15$ & $0.25$ & $0.5$ & $1.0$ \\
			& \scriptsize 0 & \scriptsize.0006 & \scriptsize.0012 & \scriptsize.0030 & \scriptsize.0054 & \scriptsize.012 & \scriptsize.042 & \scriptsize.147 \\
			\hline
			$k$-NN          & \underline{79.0} & \underline{77.3} & \underline{76.9} & 71.4 & 62.1 & 54.5 & 53.0 & 52.7 \\
			\MMR ($\lambda^{*}$)  & 78.2 & 77.0 & 76.9 & \underline{73.2} & \underline{73.3} & \textbf{73.5} & \textbf{73.9} & \textbf{74.9} \\
			Maxmin       & 45.7 & 46.1 & 46.2 & 46.7 & 47.1 & 47.5 & 48.6 & 51.8 \\
			Greedy \DPP  & 69.2 & 69.5 & 69.9 & 70.3 & 70.6 & 70.8 & \underline{71.3} & \underline{72.4} \\
			\RNGSCORE ($\gamma^{*}$)   & \textbf{79.2} & \textbf{77.8} & \textbf{77.8} & \textbf{76.5} & \textbf{74.7} & \underline{71.7} & 68.1 & 65.4 \\
			\hline
		\end{tabular}
	\end{table}

	The sweep further shows that no fixed operating point dominates the whole axis: the \RNGSCORE is the best method through $\rho=0.15$, while \MMR\ overtakes it from $\rho=0.25$ onward yet gives up S-Recall on every clean level.

	The primary test confirms that the effect of diversification itself changes with redundancy, rather than merely differing in significance across levels: on the headline sweep every method's clean-to-heavy improvement is significant, from $+12.5$ points for the \RNGSCORE to $+32.4$ for Maxmin, and the same holds on every other sweep with a single exception (Table~\ref{tab:interactions} in Appendix~\ref{app:stats}).

	Figure~\ref{fig:crossover} replicates the sweep under three encoders on the common measured-redundancy axis. The regime shape is fully reproduced under all three, and the mapping from $\rho$ to measured redundancy agrees within $6\%$ across them, which licenses the common axis. The strong encoders share the sign-change bracket ($\rho=0.05$ to $0.1$), while under the weak encoder the seed-mean effect of the tuned \MMR\ changes sign two grid steps earlier, between the clean level and $\rho=0.025$ ($-0.2$ points at a measured fraction of $0.0001$, $+0.2$ at $0.0006$). The replicas carry no per-level intervals: the earlier sign change is descriptive, the difference in location is not tested, and the encoder and mechanism replications are read as qualitative reproductions of the regime shape with no claim of effect homogeneity.

	\begin{figure}[ht]
		\centering
		\begin{subfigure}[b]{0.48\linewidth}
			\centering
			\includegraphics[width=\linewidth, alt={Line plot of S-Recall against measured pool redundancy on a logarithmic axis for k-NN, MMR*, RNG-Score* and greedy DPP with the bge-m3 encoder. k-NN, MMR* and RNG-Score* start near 79 on clean pools; k-NN collapses to about 53 once redundancy exceeds about 0.003, MMR* stays above 73, RNG-Score* declines at the highest levels, and greedy DPP rises slowly from about 69 to 72.}]{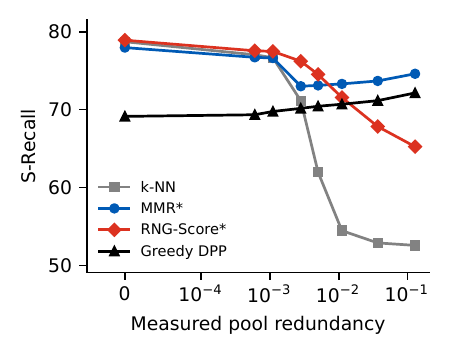}
			\subcaption{\texttt{bge-m3}}
			\label{fig:crossover-bge}
		\end{subfigure}\hfill
		\begin{subfigure}[b]{0.48\linewidth}
			\centering
			\includegraphics[width=\linewidth, alt={Line plot of S-Recall against measured pool redundancy on a logarithmic axis for k-NN, MMR*, RNG-Score* and greedy DPP with the Qwen3-Embedding-4B encoder. k-NN, MMR* and RNG-Score* start near 80 on clean pools; k-NN collapses to about 53 at high redundancy, MMR* and greedy DPP settle near 75 to 77, and RNG-Score* declines to about 67.}]{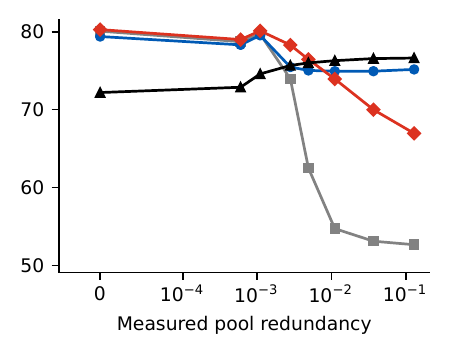}
			\subcaption{\texttt{Qwen3-Embedding-4B}}
			\label{fig:crossover-qwen}
		\end{subfigure}\\[2pt]
		\begin{subfigure}[b]{0.48\linewidth}
			\centering
			\includegraphics[width=\linewidth, alt={Line plot of S-Recall against measured pool redundancy on a logarithmic axis for k-NN, MMR*, RNG-Score* and greedy DPP with the all-MiniLM-L6-v2 encoder. All methods start between 60 and 65; k-NN falls to about 51 at high redundancy, RNG-Score* to about 58, while MMR* and greedy DPP rise to about 66 to 67.}]{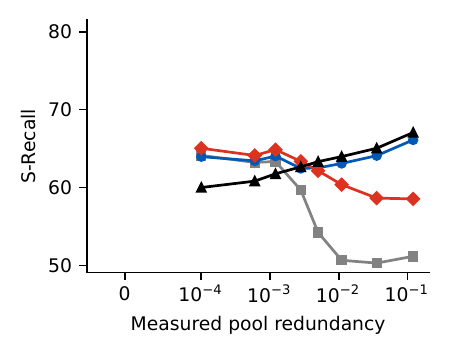}
			\subcaption{\texttt{all-MiniLM-L6-v2}}
			\label{fig:crossover-minilm}
		\end{subfigure}
		\caption{The HotpotQA injection sweep under three encoders: S-Recall
			vs measured near-duplicate pair fraction \eqref{eq:pool-redundancy}, on
			common axes (log-scaled with the clean level kept on-axis).}
		\label{fig:crossover}
	\end{figure}

	\paragraph{Evidence multiplicity on BEIR.}
	Multi-evidence FiQA-2018 replicates the injection experiment on a natural non-QA benchmark (Figure~\ref{fig:crossover-fiqa}). The regime turns exactly as on multi-hop QA: both tuned diversifiers overtake $k$-NN significantly and stably from $\rho=0.1$ onward, by up to $+11.3$ S-Recall@10 points at $\rho=1$, with the tuned margin of the \RNGSCORE at an interior negative value across the sweep (Table~\ref{tab:rq2-margins}). At $\rho=0$ neither differs significantly from $k$-NN. Single-hop SciFact shows the opposite: under the same encoder, neither tuned diversifier differs significantly from $k$-NN at any injection level (Figure~\ref{fig:crossover-scifact}).

	\begin{figure}[ht]
		\centering
		\begin{subfigure}[b]{0.48\linewidth}
			\centering
			\includegraphics[width=\linewidth, alt={Line plot of S-Recall at 10 against measured pool redundancy on a logarithmic axis for k-NN, MMR*, RNG-Score* and greedy DPP on SciFact. k-NN, MMR* and RNG-Score* stay flat between about 80 and 84 across the sweep, while greedy DPP rises from about 69 to about 82 at the heaviest level.}]{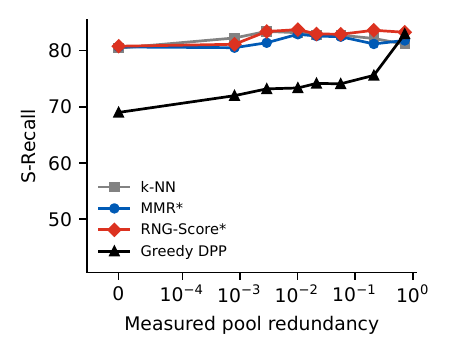}
			\subcaption{SciFact (single-hop)}
			\label{fig:crossover-scifact}
		\end{subfigure}\hfill
		\begin{subfigure}[b]{0.48\linewidth}
			\centering
			\includegraphics[width=\linewidth, alt={Line plot of S-Recall at 10 against measured pool redundancy on a logarithmic axis for k-NN, MMR*, RNG-Score* and greedy DPP on FiQA-2018. k-NN declines from about 57 to 50 as redundancy grows, MMR* and RNG-Score* rise slightly to about 61, and greedy DPP rises from about 42 to 63.}]{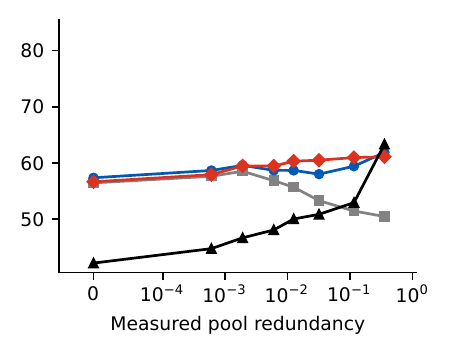}
			\subcaption{FiQA-2018 (multi-evidence)}
			\label{fig:crossover-fiqa}
		\end{subfigure}
		\caption{The BEIR injection sweeps (\texttt{bge-m3}, $k=10$, three seeds):
			S-Recall@10 vs measured near-duplicate pair fraction
			\eqref{eq:pool-redundancy}, on common axes. Single-evidence SciFact
			(\subref{fig:crossover-scifact}) stays flat across the sweep;
			multi-evidence FiQA-2018 (\subref{fig:crossover-fiqa}) reproduces the
			multi-hop collapse-and-crossover pattern.}
		\label{fig:crossover-beir}
	\end{figure}

	\paragraph{The tuned margin as an offline readout.}
	Table~\ref{tab:rq2-margins} collects the validation-selected \RNGSCORE margins across the five injection sweeps. On the headline \texttt{bge-m3} sweep, the margin flips sign at the first injection level, $+0.2$ on the clean pool and negative in every seed thereafter. Under the other encoders, the flip arrives later, with seeds disagreeing in sign exactly in the transition zones. On SciFact, seeds disagree at nearly every level. On FiQA-2018 the margin sits at the interior value $-0.2$ at every level where the seeds agree, including the clean pool: FiQA's natural pools are crowded at similarities well below the near-duplicate cutoff, the $k$-NN selection retaining an effective $3.5$ of $10$ distinct documents (Vendi) while the pair fraction \eqref{eq:pool-redundancy} reads zero.

	\begin{table}[ht]
		\centering
		\small
		\caption{Validation-selected \RNGSCORE margins $\gamma^{*}$ per injection level: modal value across the three seeds where the seeds agree in sign, $\pm$ where they do not.}
		\label{tab:rq2-margins}
		\setlength{\tabcolsep}{4pt}
		\begin{tabular}{lcccccccc}
			\hline
			\textbf{Sweep} & $\rho{=}0$ & $0.025$ & $0.05$ & $0.1$ & $0.15$ & $0.25$ & $0.5$ & $1.0$ \\
			\hline
			HotpotQA, \texttt{bge-m3} & $+0.2$ & $-0.3$ & $-0.3$ & $-0.2$ & $-0.2$ & $-0.2$ & $-0.2$ & $-0.2$ \\
			HotpotQA, \texttt{Qwen3-Embedding-4B} & $+0.3$ & $\pm$ & $\pm$ & $-0.2$ & $-0.1$ & $-0.1$ & $-0.1$ & $-0.1$ \\
			HotpotQA, \texttt{all-MiniLM-L6-v2} & $+0.3$ & $+0.3$ & $+0.2$ & $+0.2$ & $+0.2$ & $\pm$ & $-0.1$ & $-0.1$ \\
			SciFact, \texttt{bge-m3} & $-0.2$ & $\pm$ & $\pm$ & $-0.3$ & $\pm$ & $\pm$ & $\pm$ & $\pm$ \\
			FiQA-2018, \texttt{bge-m3} & $-0.2$ & $\pm$ & $-0.2$ & $-0.2$ & $-0.2$ & $-0.2$ & $-0.2$ & $-0.2$ \\
			\hline
		\end{tabular}
	\end{table}

	\paragraph{Natural redundancy through chunk overlap.}
	Table~\ref{tab:rq4-chunking} reports the chunk-overlap sweep of Section~\ref{sec:protocol}, where redundancy arises naturally from overlapping sliding windows: S-Recall is reported for every method both on the non-overlapping pool and at the heaviest overlap. Across the six sweeps $k$-NN drifts down by up to $4.0$ points at the heaviest overlap while the relevance-aware diversifiers hold essentially flat or improve, most steeply on the multi-hop 2WikiMultiHopQA and MuSiQue pools, where greedy \DPP\ gains $7.2$ and $9.9$ points. Moreover, with the exception of \DPP on HotpotQA, the relevance-aware diversifiers all score higher than $k$-NN in the more redundant setting. The primary test is significant for every method on every chunk-overlap sweep (Table~\ref{tab:interactions}): the natural mechanism reproduces the injected one at a smaller amplitude.

	\begin{table}[ht]
		\centering
		\small
		\caption{Chunk-overlap sweep (\texttt{bge-m3}, $k=5$; $k=10$ for SciFact and TREC-COVID; 2Wiki abbreviates 2WikiMultiHopQA). Each cell reports the row method's S-Recall@$k$ on the non-overlapping pool~/~at overlap $0.75$ (tuned operating points per level); the header rows give each dataset's hop count and measured redundancy \eqref{eq:pool-redundancy} at overlap $0.75$. Best per dataset in bold, second best underlined (ties both bold and occupying both ranks). The NQ-Open sweep is single-run; conclusions on the small SciFact and TREC-COVID query sets are exploratory (Appendix~\ref{app:stats}); saturated TREC-COVID (mean $493$ relevant documents per query) has no hop count.}
		\label{tab:rq4-chunking}
		\setlength{\tabcolsep}{3.5pt}
		\begin{tabular}{lcccccc}
			\hline
			\textbf{Method} & \textbf{HotpotQA} & \textbf{2Wiki} & \textbf{MuSiQue} & \textbf{SciFact} & \textbf{NQ-Open} & \textbf{TREC-COVID} \\
			{\scriptsize hops} & \scriptsize 2 & \scriptsize 2 & \scriptsize 2--4 & \scriptsize 1 & \scriptsize 1 & \scriptsize sat. \\
			{\scriptsize red$_{.75}$} & \scriptsize .001 & \scriptsize .011 & \scriptsize .015 & \scriptsize .002 & \scriptsize .005 & \scriptsize .006 \\
			\hline
			$k$-NN       & \underline{78.1}/75.0 & \underline{88.5}/87.4 & 74.1/73.2 & \textbf{77.2}/73.2 & 92.1/89.7 & \textbf{21.8}/23.1 \\
			Maxmin       & 46.5/48.3 & 74.7/83.6 & 56.9/69.5 & 57.8/62.2 & 86.8/87.6 & 14.9/21.9 \\
			Greedy \DPP  & 69.5/70.9 & 87.2/\textbf{94.4} & 76.2/\textbf{86.1} & 73.3/74.1 & 91.6/92.0 & 18.6/25.9 \\
			\MMR$^{*}$   & 77.6/\underline{76.6} & 88.2/91.5 & \underline{76.3}/85.0 & \underline{77.0}/\underline{76.1} & \underline{92.8}/\textbf{93.0} & \underline{21.4}/\underline{27.2} \\
			\RNGSCORE$^{*}$ & \textbf{78.2}/\textbf{78.2} & \textbf{88.7}/\underline{93.7} & \textbf{76.7}/\underline{85.7} & 75.9/\textbf{76.4} & \textbf{92.9}/\textbf{93.0} & 21.3/\textbf{27.7} \\
			\hline
		\end{tabular}
	\end{table}

	\paragraph{The saturated-pool regime.}
	A third point on the evidence-multiplicity axis is the \emph{saturated} pool, where almost every candidate is relevant. TREC-COVID is the extreme case with a mean of 493 relevant documents per query. Under the chunk-overlap sweep (\texttt{bge-m3}, $k=10$), $k$-NN coverage does not collapse, S-Recall@10 staying flat to slightly rising: there is always another relevant document to take a crowded slot. Yet, at the highest overlap, the tuned \RNGSCORE significantly leads $k$-NN by $4.6$ points (last row of Table~\ref{tab:rq4-chunking}). Saturation thus protects relevance-only selection from redundancy without eliminating the diversification gain, and the frozen rule of \ref{rq:rule} shows no significant difference from $k$-NN on the natural pools while gaining significantly at the heaviest overlap (Table~\ref{tab:rq6-transfer}).

	All manipulations thus give the same reading of \ref{rq:redundancy}: raising pool redundancy does create a regime in which diversification pays, at a crossover that the deployment-level statistic \eqref{eq:pool-redundancy} localizes on the measured axis, and the risk of committing to a method without knowing the regime is the misspecification downside of Table~\ref{tab:rq3-regret}, up to $33$ coverage points for the popular fixed commitments.

	\subsection{RQ3: Oracle headroom}
	\label{sec:rq3}

	Results from the oracle protocol of Section~\ref{sec:protocol} are reported in Figure~\ref{fig:oracle}. On clean pools of HotpotQA, the headroom is negligible: an oracle with ground-truth access that chooses freely among all methods and grid values per-query improves S-Recall over $k$-NN on only $5.8\%$ of queries for a gain of $3.1$ points. The headroom starts to open at the lower edge of the crossover bracket (a measured redundancy of $0.0012$, reached at $\rho=0.05$), where the oracle beats $k$-NN on $9.2\%$ of queries, rises to $19.1\%$ at the next level and peaks when redundancy is maximal, where it improves on $k$-NN on over half of queries (its per-level coverage appears as the oracle row of Table~\ref{tab:rq6-rule}). On single-hop SciFact, it stays at $4$--$10\%$ across the whole sweep. On multi-evidence FiQA-2018, whose crowded natural pools already leave the oracle $8.9\%$ of queries to improve on the clean level, the headroom climbs monotonically to $34.8\%$ at $\rho=1$. On saturated TREC-COVID, the injection sweep confirms the chunk-overlap picture: the oracle already beats $k$-NN on a quarter of the queries on the natural pools and on $55$--$99\%$ across the injection levels.

	The chunk-overlap sweeps reproduce the pattern (Figure~\ref{fig:oracle-chunking}): headroom rises with overlap on every multi-evidence dataset, stays low on the single-hop datasets ($5$--$7\%$ on NQ-Open, $7$--$12\%$ on SciFact) and is largest on saturated TREC-COVID ($55$--$92\%$), where a better selection almost always exists even though $k$-NN never collapses.

	\begin{figure}[ht]
		\centering
		\begin{subfigure}[b]{0.49\linewidth}
			\centering
			\includegraphics[width=\linewidth, alt={Line plot of the percentage of test queries on which the all-methods oracle beats k-NN against the injected fraction rho for HotpotQA, 2Wiki, MuSiQue, SciFact, FiQA-2018 and TREC-COVID. The percentage increases with rho on every dataset, from about 25 to 99 percent on TREC-COVID, about 28 to 77 on MuSiQue, about 6 to 56 on HotpotQA and 2Wiki, and stays near 5 to 10 percent on SciFact.}]{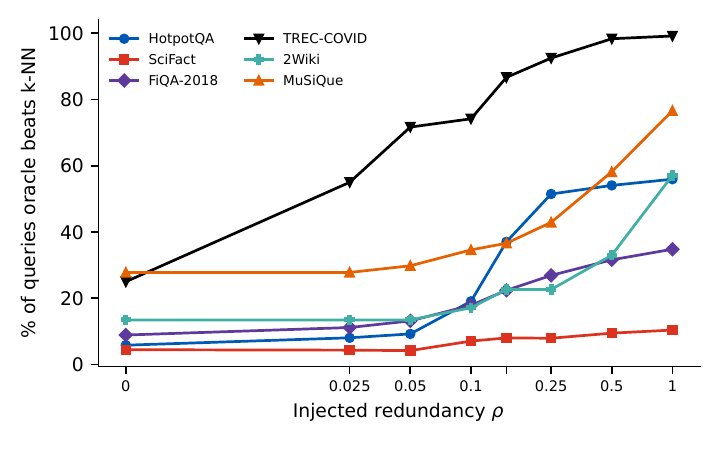}
			\subcaption{Injection sweeps}
			\label{fig:oracle-injection}
		\end{subfigure}\hfill
		\begin{subfigure}[b]{0.49\linewidth}
			\centering
			\includegraphics[width=\linewidth, alt={Line plot of the percentage of test queries on which the all-methods oracle beats k-NN against chunk overlap from 0 to 0.75 for the same six datasets. The percentage increases with overlap on every dataset, from about 55 to 92 percent on TREC-COVID, about 34 to 52 on MuSiQue, about 20 to 39 on 2Wiki, and remains below about 20 percent on FiQA-2018, HotpotQA and SciFact.}]{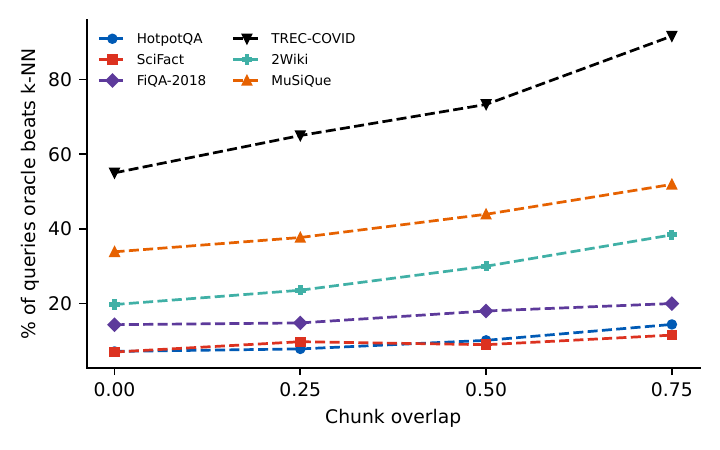}
			\subcaption{Chunk-overlap sweeps}
			\label{fig:oracle-chunking}
		\end{subfigure}
		\caption{Oracle headroom: percentage of test queries on which the
			all-methods oracle beats $k$-NN (HotpotQA, 2Wiki and MuSiQue at
			$k=5$; SciFact, FiQA-2018 and TREC-COVID at $k=10$; three seeds
			throughout).
			(\subref{fig:oracle-injection}) Injection sweeps vs injected fraction
			$\rho$ (log-scaled axis with the clean level kept on-axis);
			(\subref{fig:oracle-chunking}) Chunk-overlap sweeps vs overlap.}
		\label{fig:oracle}
	\end{figure}

	\subsection{RQ4: The decision rule}
	\label{sec:rq4}

	Under the tuning protocol of Section~\ref{sec:protocol}, applied once to the \texttt{bge-m3} HotpotQA injection sweep with the fallback diversifier ranging over every method and grid value, free threshold tuning selects $\tau^{*}=1.83$ ($1.96/1.71/1.83$ per seed) with $D^{*}=\MMR(\lambda=0.7)$ in every seed. Retuned on the $k=10$ replica of the same sweep, it selects $\tau^{*}=1.9$ ($1.9/1.9/1.8$ per seed). The tuned threshold thus stays at the evidence requirement $h=2$ for both $k=5$ and $k=10$. The selection is insensitive: every threshold in $\tau\in[1.5,2.5]$ performs within half a point of the optimum at $k=5$ (Figure~\ref{fig:threshold-curve}) and replacing the Vendi trigger with the mean pairwise distance of the $k$-NN selection reproduces the headroom capture within half a point at every level. We therefore freeze the deployed rule at $\tau=h$, that is $2$ on the two-hop benchmarks, and use it everywhere below. The threshold is interpretable: we diversify only when the selection carries fewer effectively distinct documents than the query needs.

	\begin{figure}[ht]
		\centering
		\begin{subfigure}[b]{0.46\linewidth}
			\centering
			\includegraphics[width=\linewidth, alt={Curve of pooled validation S-Recall at 5 against the trigger threshold tau from 1 to 4 on HotpotQA. The curve rises steeply from the always k-NN value near 66 to a broad plateau near 76 around tau between 1.8 and 2, then declines slowly toward the always MMR(0.7) value near 74. Vertical dashed lines mark the tuned value tau* = 1.83 and the deployed value tau = h = 2.}]{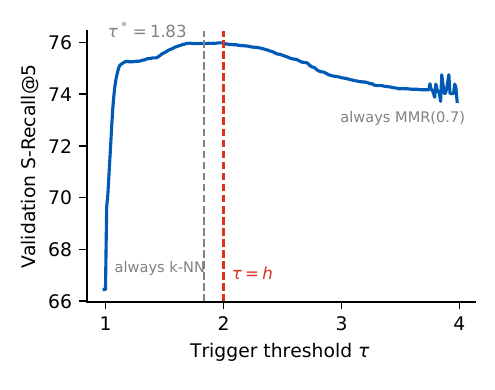}
			\subcaption{Threshold tuning curve}
			\label{fig:threshold-curve}
		\end{subfigure}\hfill
		\begin{subfigure}[b]{0.52\linewidth}
			\centering
			\includegraphics[width=\linewidth, alt={Line plot of test S-Recall at 5 against injected redundancy rho on HotpotQA for always k-NN, always MMR(0.7), the frozen rule with tau equal to h, and the per-query oracle. Always k-NN falls from about 79 to 53, always MMR stays between 73 and 75, the rule matches k-NN on clean pools and MMR on redundant pools, and the oracle stays near 80 to 82.}]{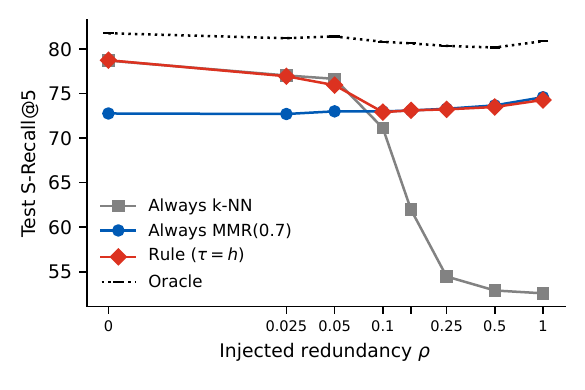}
			\subcaption{Frozen rule across injection levels}
			\label{fig:threshold-rule}
		\end{subfigure}
		\caption{The decision rule on the HotpotQA sweep.
			(\subref{fig:threshold-curve}) Pooled validation S-Recall@5 as the trigger
			threshold $\tau$ varies continuously.
			(\subref{fig:threshold-rule}) The frozen rule per injection level
			(log-scaled $\rho$ axis, ticks at the sweep levels).}
		\label{fig:threshold}
	\end{figure}

	\begin{table}[ht]
		\centering
		\small
		\caption{The frozen rule ($\tau=h=2$, $D^{*}=\MMR(0.7)$) per injection level, S-Recall@5. ``Per-level tuned'' knows the regime; the oracle is the all-methods per-query bound of Section~\ref{sec:rq3}; the trigger rate is the fraction of queries on which the rule diversifies. Among the three policies, best in bold, second best underlined.}
		\label{tab:rq6-rule}
		\setlength{\tabcolsep}{4pt}
		\begin{tabular}{lccccccccc}
			\hline
			\textbf{Selector} & $\rho{=}0$ & $0.025$ & $0.05$ & $0.1$ & $0.15$ & $0.25$ & $0.5$ & $1.0$ & \textbf{pool} \\
			\hline
			always $k$-NN          & \textbf{78.7} & \textbf{77.0} & \textbf{76.7} & 71.1 & 62.0 & 54.5 & 52.9 & 52.6 & 65.7 \\
			always $D^{*}$      & 72.8 & 72.7 & 73.0 & \textbf{73.0} & \textbf{73.1} & \textbf{73.3} & \textbf{73.7} & \textbf{74.6} & \underline{73.3} \\
			rule                & \textbf{78.7} & \underline{76.9} & \underline{76.0} & \underline{72.9} & \textbf{73.1} & \underline{73.2} & \underline{73.5} & \underline{74.3} & \textbf{74.8} \\
			\quad trigger\,\%   & 0.7 & 3.3 & 15.5 & 81.0 & 87.8 & 90.0 & 91.4 & 92.9 & \\
			\hline
			\multicolumn{10}{l}{\emph{Reference bounds (not deployable policies)}}\\
			per-level tuned     & 78.9 & 77.6 & 77.5 & 76.2 & 74.5 & 73.3 & 73.7 & 74.6 & 75.8 \\
			oracle              & 81.8 & 81.2 & 81.4 & 80.8 & 80.6 & 80.4 & 80.2 & 80.9 & 80.9 \\
			\hline
		\end{tabular}
	\end{table}

	Table~\ref{tab:rq6-rule} is the central practical result of the paper. The frozen rule tracks the upper envelope of the two fixed policies across the whole axis. On the three cleanest levels, it fires on under $16\%$ of queries and stays within $0.7$ points of $k$-NN, avoiding the toll the diversifier pays there; from $\rho=0.1$ onward it fires on $81$--$93\%$ and lands within $0.3$ points of always-\MMR. Pooled across levels it captures $60\%$ ($95\%$ CI $[58,62]$) of the all-methods oracle headroom of $15.3$ points $[15.0,15.7]$, a bound that uses ground-truth labels per query, and $91\%$ $[89,92]$ of the gain of the per-level tuned selector.

	\paragraph{Transfer of the frozen rule.}
	We apply the frozen rule with no retuning to fifteen further fine-grid sweeps, grouped in Table~\ref{tab:rq6-transfer} by what they change. All use $k=5$ except the HotpotQA $k=10$ injection replica and the SciFact, FiQA-2018 and TREC-COVID sweeps. The threshold everywhere is the task's evidence requirement of Section~\ref{sec:protocol}. The rule transfers across encoders, across $k$, across the redundancy mechanism and across datasets. On clean pools, its cost stays within the one-point materiality tolerance everywhere except on the two FiQA-2018 sweeps ($-1.0$ points under injection, significant; $-1.0$ under chunk overlap, not significant) and the two saturated TREC-COVID sweeps ($-1.7$ under injection, significant; $-2.0$ under chunk overlap, not significant on that fifty-query set). Under redundancy, the rule gains on every multi-evidence sweep except on the chunk overlap run of FiQA-2018, where no level differs significantly from $k$-NN. On single-hop NQ-Open the trigger condition is unsatisfiable and the rule is exactly $k$-NN, while on saturated TREC-COVID, whose relevant-set sizes keep the trigger firing on every query, the rule coincides with always-\MMR(0.7): under injection it trades its clean-pool toll for one of the largest heavy-redundancy gains in the table ($+12.9$ points, $95\%$ CI $[10.4,15.8]$, at $\rho=1$).

	All transfer significance statements follow the query-clustered procedure of Appendix~\ref{app:stats}, computed per level and pooled from the displayed runs' per-query rows with Holm adjustment within each sweep. The per-level effects, intervals and trigger rates are released with the analysis outputs.

	\begin{table}[ht]
		\centering
		\small
		\caption{Transfer of the frozen rule ($\tau=\min\{h,k\}$ with gold $h$, an oracle analysis; $D^{*}=\MMR(0.7)$) to further sweeps with no retuning, grouped by what each sweep changes relative to the tuning sweep (\texttt{bge-m3} HotpotQA injection). Each $\Delta$ is the rule's S-Recall@$k$ minus $k$-NN's at the same sweep level, averaged over seeds (positive favors the rule): clean$\Delta$ is the worst $\Delta$ over the near-clean levels ($\rho\le0.05$ on injection sweeps, overlap $\le0.25$ on chunking sweeps), heavy$\Delta$ the $\Delta$ at the heaviest level ($\rho=1$, respectively overlap $0.75$) and pooled$\Delta$ the difference with all levels pooled.}
		\label{tab:rq6-transfer}
		\begin{tabular}{lccc}
			\hline
			\textbf{Transfer target} & \textbf{clean$\Delta$} & \textbf{heavy$\Delta$} & \textbf{pooled$\Delta$} \\
			\hline
			\multicolumn{4}{l}{\emph{Different encoder (injection)}}\\
			\quad Qwen3-Embedding-4B     & $-0.3$ & $+22.5$ & $+9.8$ \\
			\quad all-MiniLM-L6-v2       & $-0.2$ & $+13.6$ & $+5.9$ \\
			\multicolumn{4}{l}{\emph{Different $k$ (injection)}}\\
			\quad HotpotQA ($k{=}10$)   & $\phantom{-}0.0$ & $+24.1$ & $+5.7$ \\
			\multicolumn{4}{l}{\emph{Different mechanism (chunk overlap)}}\\
			\quad HotpotQA      & $\phantom{-}0.0$ & $+1.4$ & $+0.4$ \\
			\multicolumn{4}{l}{\emph{Different dataset (injection)}}\\
			\quad SciFact (single-hop, $k{=}10$) & $-0.7$ & $+1.8$ & $+0.4$ \\
			\quad FiQA-2018 (multi-evidence, $k{=}10$) & $-1.0$ & $+9.2$ & $+2.9$ \\
			\quad 2WikiMultiHopQA (multi-hop) & $\phantom{-}0.0$ & $+10.7$ & $+1.5$ \\
			\quad MuSiQue (multi-hop) & $+0.7$ & $+21.5$ & $+5.6$ \\
			\quad TREC-COVID (saturated, $k{=}10$) & $-1.7$ & $+12.9$ & $+3.1$ \\
			\multicolumn{4}{l}{\emph{Different dataset and mechanism (chunk overlap)}}\\
			\quad NQ-Open (single-hop)   & $\phantom{-}0.0$ & $\phantom{-}0.0$ & $\phantom{-}0.0$ \\
			\quad SciFact (single-hop, $k{=}10$) & $-0.4$ & $+0.4$ & $\phantom{-}0.0$ \\
			\quad FiQA-2018 (multi-evidence, $k{=}10$) & $-1.0$ & $+0.3$ & $-0.2$ \\
			\quad 2WikiMultiHopQA (multi-hop)       & $\phantom{-}0.0$ & $+3.2$ & $+1.0$ \\
			\quad MuSiQue (multi-hop) & $+2.3$ & $+10.0$ & $+5.4$ \\
			\quad TREC-COVID (saturated, $k{=}10$) & $-2.0$ & $+4.1$ & $+0.7$ \\
			\hline
		\end{tabular}
	\end{table}

	\subsection{RQ5: Answer quality under redundancy}
	\label{sec:rq5}

	We re-run the injection sweep on HotpotQA and 2WikiMultiHopQA under the generation protocol of Section~\ref{sec:protocol} with \texttt{flan-t5-base}. We see in Figure~\ref{fig:generation} that the picture established at the retrieval level carries over to answer quality. On clean pools diversification hurts, \MMR costing a significant $3.1$ EM points against $k$-NN (paired risk difference, $95\%$ CI $[-4.0,-2.3]$). Under redundancy, $k$-NN answer quality collapses while the diversifier holds flat, leading $k$-NN significantly from $\rho=0.1$ onward and by $+4.7$ EM points $[3.6,5.8]$ at $\rho=1$. Moreover, as previously observed at the retrieval level, the decision rule tracks the upper envelope, sitting on $k$-NN at $\rho=0$ (firing on under $1\%$ of queries), switching to the diversifier once redundancy appears and staying within $0.2$ EM points of the better of the two policies at every level (Figure~\ref{fig:generation}). 2WikiMultiHopQA tells a softer but similar story: the diversifier's and the rule's advantage materializes only at the highest level ($+2.0$ EM points $[1.2,2.8]$ at $\rho=1$, significantly below $k$-NN on the cleaner levels), the rule staying within $0.7$ EM points of $k$-NN at moderate $\rho$. Because \texttt{flan-t5-base} may handle distractors and redundant context differently from current instruction-tuned models, Table~\ref{tab:generation} repeats the comparison at the clean and heaviest levels with \texttt{Qwen3.8-27B} \cite{qwenteam2026qwen38} under fixed prompts and deterministic decoding. The regime pattern is unchanged: $k$-NN and the rule lead on clean pools, \MMR and the rule under heavy redundancy, the rule again tracking the better fixed policy within $0.1$ points at both endpoints.

	\begin{figure}[ht]
		\centering
		\begin{subfigure}[b]{0.49\linewidth}
			\centering
			\includegraphics[width=\linewidth, alt={Line plot of test exact match against injected redundancy rho on HotpotQA with the flan-t5-base generator for always k-NN, always MMR(0.7) and the rule with tau equal to h. Always k-NN falls from 34 to about 26.5, always MMR stays near 31, and the rule matches k-NN on clean pools and MMR at every injected level.}]{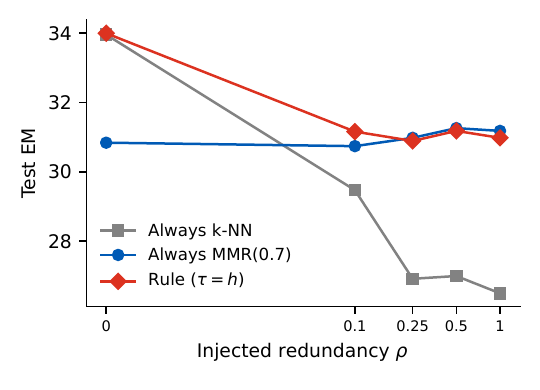}
			\subcaption{HotpotQA}
			\label{fig:generation-hotpot}
		\end{subfigure}\hfill
		\begin{subfigure}[b]{0.49\linewidth}
			\centering
			\includegraphics[width=\linewidth, alt={Line plot of test exact match against injected redundancy rho on 2WikiMultiHopQA with the flan-t5-base generator for always k-NN, always MMR(0.7) and the rule with tau equal to h. Always k-NN stays near 31 up to rho 0.25 and then falls to about 27.4 at rho 1, always MMR stays near 29.5, and the rule tracks k-NN before ending near 29.5.}]{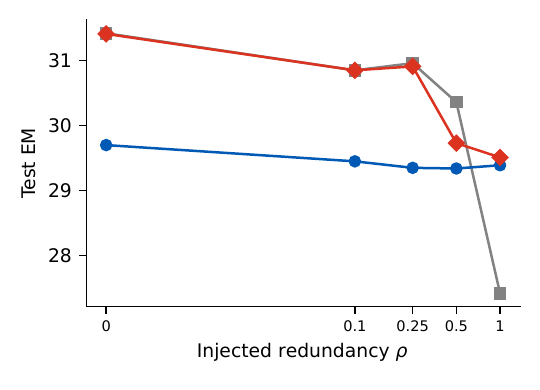}
			\subcaption{2WikiMultiHopQA}
			\label{fig:generation-2wiki}
		\end{subfigure}
		\caption{Exact match vs injected redundancy on (\subref{fig:generation-hotpot}) HotpotQA and (\subref{fig:generation-2wiki}) 2WikiMultiHopQA, generator \texttt{flan-t5-base}.
			Log-scaled $\rho$ axis with ticks at the sweep levels.}
		\label{fig:generation}
	\end{figure}

	\begin{table}[ht]
		\centering
		\small
		\caption{Answer quality (EM/F1) at the clean and heaviest injection levels, generator \texttt{Qwen3.8-27B} (a single run per dataset). Best in bold, second best underlined.}
		\label{tab:generation}
		\setlength{\tabcolsep}{6pt}
		\begin{tabular}{lcc|cc}
			\hline
			& \multicolumn{2}{c|}{HotpotQA} & \multicolumn{2}{c}{2WikiMultiHopQA} \\
			\textbf{Method} & $\rho{=}0$ & $\rho{=}1$ & $\rho{=}0$ & $\rho{=}1$ \\
			\hline
			$k$-NN            & \textbf{50.8}/\textbf{63.6} & 41.7/53.6 & \textbf{63.4}/\underline{71.8} & 49.5/56.5 \\
			\MMR($\lambda{=}0.7$) & 47.6/59.9 & \textbf{47.8}/\textbf{60.5} & 59.3/67.5 & \textbf{55.5}/\textbf{63.5} \\
			rule ($\tau{=}h$)  & \textbf{50.8}/\textbf{63.6} & \textbf{47.8}/\underline{60.4} & \textbf{63.4}/\textbf{71.9} & \underline{55.4}/\underline{63.4} \\
			\hline
		\end{tabular}
	\end{table}

	\section{Discussion}
	\label{sec:discussion}

	\subsection{Reconciling the literature}

	Read jointly, the results of Section~\ref{sec:results} resolve the contradiction of Section~\ref{sec:background}. Studies reporting that diversity benefits retrieval in RAG pipelines operate on redundant pools: chunked long contexts \cite{khan2026dfrag}, iterative pipelines that re-fetch overlapping content \cite{rezaei2025vendirag}, redundancy-heavy benchmarks \cite{corteslemos2025drag,pickett2024infogain} or older first-stage encoders of the fixed-pool selection methods \cite{corteslemos2025drag,li2024smartrag}. AdaGReS makes the dependence to redundancy explicit by targeting near-duplicate chunks \cite{adagres2025}. Studies reporting that diversity does not improve retrieval, including the controlled evaluations of \citet{eibich2024aragog} and \citet{bal2026biomedrag}, operate on nonredundant pools. The cross-encoder comparison of Section~\ref{sec:rq1} likewise reproduces the finding that reranking, not diversification, is the high-leverage stage of the pipeline \cite{wang2024bestpractices}. Neither camp is wrong about its own regime: the error is the implicit claim that the conclusion transfers across regimes.

	The reported sweeps quantify how low the boundary between regimes sits on the strong-encoder multi-hop tasks. A measured near-duplicate pair fraction of the order of $10^{-3}$ is enough for $k$-NN retrieval to collapse and under the weaker encoder the sign change appears earlier still (descriptively, see Figure~\ref{fig:crossover-minilm}), in line with evaluations that use older encoders landing in the regime where diversification pays. The oracle bounds explain the adaptive end of the literature the same way: adaptivity, like diversification itself, is valuable in proportion to measured redundancy, the per-query headroom moving from about $6\%$ of queries to over half across the crossover (Figure~\ref{fig:oracle}), which is why per-query selection of the diversity level helps in the chunked long-context setting of \citet{khan2026dfrag} yet has nothing to capture on many benchmarks.

	Other works in the literature reporting positive results regarding retrieval diversification concern notions of diversity different from the one studied here: diversity on the query side, whether through lexical decomposition \cite{zhang2025drag} or coverage of a sub-query demand distribution \cite{zhang2026georag}, selection for long contexts \cite{wang2025diversityllm} and open-ended tasks in which answer diversity is itself rewarded \cite{salemi2025planrefine,diverge2026}. None of these evaluates geometric diversification of a fixed candidate pool where diversity is merely instrumental, so their conclusions complement rather than contradict the present analysis.

	\subsection{Clean pools resist diversification}

	The systematic dissociation between annotation-free and intent-aware diversity metrics is, after the decision rule, the most practically consequential observation of the study. APD and the Vendi score are routinely reported as evidence that a diversified selection is better. However, on every dataset we evaluate, methods that maximize them minimize $\alpha$-NDCG, S-Recall, and answer quality. Our interpretation is geometric: in modern embedding spaces, the distinct facets of a multi-hop question occupy a compact neighborhood of the query, so pushing a selection out of that region substitutes gold passages for related but irrelevant ones. These are harmful for the generator \cite{cuconasu2024noise,amiraz2025distracting} and add nothing to the context sufficiency that generation actually requires \cite{joren2025sufficient}. ArguAna is the extreme case: the single relevant counter-argument sits directly adjacent to the query in embedding space, so any outward pressure discards it, which is how \MMR collapses to an NDCG@10 of $0.2$ (Table~\ref{tab:rq1-beir}). The all-methods oracle bound of Section~\ref{sec:rq3} allows us to push the analysis further: on nonredundant pools the problem is not that the diversifiers are badly tuned or insufficiently adaptive, but that there is almost nothing for diversification to gain over $k$-NN. Thus, evaluations should treat geometric spread as a descriptive statistic rather than an objective and should instead report coverage-aware metrics whenever labels permit. Those spread statistics may nonetheless be used as diagnostics, analogously to the decision rule using the Vendi score as a redundancy gauge.

	\subsection{Reading the regime from the \RNGSCORE}

	The margin trajectories obtained from the \RNGSCORE in Table~\ref{tab:rq2-margins} are the hinge geometry at work. A negative margin restricts the obstruction penalty to pairs at duplicate distance scale, demoting copies while leaving distinct evidence untouched, while a positive margin penalizes more severely distinct documents, meaning that there is no possible gain to be made with near-duplicates. The margin thus reads duplicate-scale structure specifically, not the regime at large, and the other trajectories confirm it: a sign flip occurs at different levels for different encoders because each geometry places duplicate pairs at a distinct scale relative to query--document distances. The weak encoder stays positive longest, its low-redundancy gains coming from a more conflated structure rather than from literal duplicates. Furthermore, seeds disagree in sign exactly where the validation objective is flat, i.e. in the transition zones. On SciFact however, they disagree across nearly the whole axis, meaning that the margin is reading noise precisely where there is nothing to measure. Therefore, we see that a small negative margin signals duplicate structure worth removing, while a fallback-region or seed-unstable margin indicates a pool with nothing to remove.

	\subsection{Deploying the rule}

	The sweeps show that no fixed method dominates the redundancy axis, so a deployed policy should adapt accordingly. The decision rule proposed here adapts on the right variable: its trigger reads off exactly the quantity whose collapse causes the $k$-NN failure and the per-query distributions of $T(q)$ barely overlap across the crossover, which is why a wide band of thresholds performs identically and why the frozen rule transfers to redundancy levels and mechanisms it never saw. Adapting per query matters because the crossover location itself is specific to each task. The steepness of the turn is consistent with evidence multiplicity: gold-concentrated redundancy can crowd out a needed document only when the query needs more than one, in line with which the multi-hop datasets and multi-evidence FiQA collapse at a few near-duplicate per mille while single-hop SciFact and NQ-Open reach the regime only under extreme duplication, mildly even then. Across our experiments, the observed boundary spans from below $0.0006$ (the weak encoder) to approximately $0.006$ (FiQA at $k=10$, where the gains become stable; its clean-pool effect never turns significantly negative, so there is no sign change to bracket there) and does not exist on SciFact. Thus, although the specific location does not generalize, the existence and sharpness of a measurable crossover does on multi-hop datasets and the trigger localizes that boundary per query without ever measuring it.

	Evidence multiplicity also anchors the rule's threshold. The two variables of the analysis play the distinct roles that the rule's architecture mirrors: redundancy is the lever, the dynamic variable that is usually unmeasured in production pipelines; while evidence multiplicity supplies the threshold, treated in this paper as a static workload property because our evaluation reads $h$ from gold annotations, the hop count on multi-hop QA and the relevant-set size on the graded-relevance tasks, quantities that a deployment does not have per query. Label-free per-query estimators nevertheless exist off the shelf, such as query-complexity classifiers routing queries by predicted retrieval depth \cite{jeong2024adaptiverag} and query decomposition counting a question's sub-questions directly \cite{press2023selfask}. Thus, the practical cost of fixing $\tau=h$ is a single integer either estimated per query or for a whole workload. A natural refinement, left to future work, is therefore to supply the threshold with the query's own estimated evidence requirement, $T(q)<\hat{h}(q)$ with $\hat{h}(q)$ obtained from one of these estimators. This would be another step forward since, to our knowledge, no existing adaptive pipeline couples a diversification decision with an estimated hop count, the adjacent adaptive mechanisms selecting instead how many documents to keep per query \cite{meng2025dps,xu2025car} or diversifying across the inferred interpretations of ambiguous questions \cite{in2025diva}.

	The rule is the instrumental-credit counterpart of selective diversification \cite{santos2010selective}, which gates web-search diversification on predicted query ambiguity \cite{cronen2002predicting}. Among the adaptive pipelines of \citet{khan2026dfrag}, \citet{rezaei2025vendirag} and \citet{stouras2026portfolios}, it is the minimal member, spending an eigendecomposition per query where those spend planner and evaluator LLM calls or a trained router. On clean pools, the all-methods oracle bound implies that no router over the evaluated method--parameter library can do appreciably better on these test instances, and on redundant pools the rule's remaining gap to the oracle bounds what richer routing policies over the same library can still contribute.

	\subsection{A reachable fallback as a design requirement}

	The downside analysis generalizes beyond the methods compared. A fixed diversification strength interacts unpredictably with pool geometry, as \MMR's collapse on ArguAna illustrates (Table~\ref{tab:rq1-beir}), so a reachable relevance-only fallback should be a design requirement for any diversifier, met structurally (an operating point that provably reproduces $k$-NN, as at the \RNGSCORE's margin extremes or \MMR's $\lambda=1$) or operationally (gating the diversifier behind a redundancy trigger). The two routes are complementary, the gated rule achieving the better headline numbers and the exact-fallback score supplying an interpretable margin.

	\section{Limitations}
	\label{sec:limitations}

	Six limitations qualify the conclusions. None, in our assessment, threatens the regime analysis itself.
	\begin{enumerate*}[label=(\roman{enumi})]
		\item \emph{Statistical coverage.} The sweep, transfer and generation analyses carry the query-clustered paired inference of Section~\ref{sec:protocol} and Appendix~\ref{app:stats}, whereas a few result families are reported descriptively: the \texttt{Qwen3.8-27B} generation columns of Tables~\ref{tab:rq1-hotpot} and~\ref{tab:generation} (single deterministic runs, per-query answers released, no inferential claim) and the families with only aggregate evidence (the two argument-retrieval BEIR tasks, the non-HotpotQA cross-encoder datasets and the NQ-Open chunking sweep). Conclusions on the $50$--$300$-query BEIR sets are exploratory, their attainable precision being about two points, and the fifteen transfer sweeps share queries and corpora rather than forming independent validations, so that dataset-level generalization would require a hierarchical analysis with the dataset as the higher-level unit or more independently sampled corpora.

		\item \emph{Redundancy realism.} Redundancy was simulated using sentence-shuffled copies and natural chunk-overlap protocols, but other mechanisms such as injected paraphrases remain untested, as do an independently sampled production corpus and the sensitivity of the crossover location to the injection protocol. The chunking sweeps also score a subtopic as covered when any window of a gold passage is selected, an optimistic convention (a sixty-word window need not contain the answer-bearing content) that the chunking protocol does not pair with generation.

		\item \emph{Scope of the rule and its guarantees.} The rule's threshold is the task's evidence requirement, which in deployment requires task-specific knowledge of evidence multiplicity rather than per-query labels, and the rule's behavior under estimated rather than annotated $h$ is untested. The $k=10$ transfer evidence is limited to the HotpotQA injection replica and the SciFact, FiQA-2018 and TREC-COVID sweeps. Margin tuning inherits validation-split noise, so an adequate tuning set is required (on TREC-COVID, which only contains fifty queries, the selected margin varies across splits). Moreover, the oracle applies only to policies that adapt a scalar or the method per query, not policies that change the pool or the query.

		\item \emph{Task and domain scope.} The study covers English factoid, multi-hop QA and BEIR-style retrieval: multilingual retrieval, summarization and exploratory search are untested. Open-ended information seeking, where comprehensiveness or answer diversity gives diversity independent value \cite{salemi2025planrefine,diverge2026}, is out of scope and our conclusions do not extend to it.

		\item \emph{Generation.} The generation columns of Table~\ref{tab:rq1-hotpot} and the regime pattern of Section~\ref{sec:rq5}, confirmed at the clean and heaviest levels, use the instruction-tuned \texttt{Qwen3.8-27B}, but the intermediate injection levels and the remaining datasets rest on \texttt{flan-t5-base} alone, and prompt-format sensitivity is untested for both generators, as is the split and injection-draw variability of the single runs.

		\item \emph{Baselines.} The diversifiers compared are the classical fixed-pressure methods: isolating recent adaptive and Vendi-objective selectors on the same fixed pools, and a quality-weighted \DPP\ with a tunable trade-off, is left to future work.
	\end{enumerate*}

	\section{Conclusion}
	\label{sec:conclusion}

	This paper asked when the retrieval stage of a RAG pipeline should diversify and answered with a regime analysis organized around a main observable variable: whether the candidate pool contains redundancy. On clean pools, the answer on the task families studied here is: essentially never. In most cases, the fixed-pressure diversifiers degrade relevance, coverage-aware diversity and answer quality, significantly so on the primary coverage endpoint. The geometric spread that diversifiers deliver is anti-correlated with the coverage they promise and the all-methods per-query oracle certifies that only about $6\%$ of queries admit any gain from the evaluated diversifiers at all. The regime changes when the pool itself is redundant, whether by injection or by overlapping chunking. Then, on the multi-evidence tasks, the relevance-aware diversifiers overtake nearest-neighbor selection, by a wide margin under injection and by a smaller one under chunk overlap.

	The practical contribution is that the regime need not be guessed. On the multi-hop tasks, the crossover occurs at a measured near-duplicate pair fraction far below what casual inspection would flag, bracketed on the order of $10^{-3}$ under both strong encoders, although the location itself is specific to each dataset and encoder. In addition, a label-free statistic that any retrieval system can compute, the Vendi score of its $k$ nearest neighbors, detects the regime per query, so the crossover location is not needed to deploy the rule. The proposed decision rule, i.e. diversify when that score falls below the number of evidence pieces $h$ the query requires, keeps its clean-pool cost within the one-point materiality tolerance except on FiQA-2018 and saturated TREC-COVID, where large relevant sets keep the trigger firing even on clean pools, recovers most of the value of perfect regime knowledge on redundant ones and transfers with no retuning across encoders, datasets and $k$. When the threshold is left free, validation recovers a large plateau of possible values which includes $h$, namely the evidence multiplicity, which is a property of the workload a deployment knows or should estimate in practice.

	The analysis also yields an evaluation standard and an instrument: misspecification downside, which across the classical diversifiers and the $k$-NN default spans $10$--$33$ coverage points, and the \RNGSCORE, whose two provable nearest-neighbor limits anchor its tuning to the relevance-only baseline and whose tuned margin turns negative once the pool acquires duplicate structure.

	\paragraph{Code and data availability.}
	The full evaluation pipeline (sweep scripts, derived analyses, decision-rule and transfer tooling, and the per-run result files behind every table and figure) is available on GitHub\footnote{ \href{https://github.com/GuillaumeBrouillette/finding-the-right-balance}{https://github.com/GuillaumeBrouillette/finding-the-right-balance}}.

	\section*{Acknowledgements}
	The authors gratefully acknowledge the Universit\'e du Qu\'ebec \`a Trois-Rivi\`eres (UQTR), which funds the Laboratoire d'intelligence artificielle appliqu\'ee (LI2A). Guillaume Brouillette was supported by the LI2A, and Guillaume Brouillette and Faustin Kagabo were supported by Mitacs through the Mitacs Accelerate program.

	{
		\bibliographystyle{plainnat}
		\bibliography{references}}

	\appendix

	\section{Theoretical properties}
	\label{app:theory}

	We present here the facts summarized in Section~\ref{sec:theory}. Recall that for a candidate pool $V\subset\R^D$, a candidate $w\in V$, a query $q\in\R^D$ and a parameter $\gamma\in\R$, the \RNGSCORE is defined as
	\begin{equation*}
		\score{\gamma}(w;q,V)
		= d(q,w) + \sum_{v\in P(w;q,V)} \hinge{d(q,w)-d(v,w)+\gamma},
	\end{equation*}
	where $P(w;q,V)=\{v\in V : d(q,v) < d(q,w)\}$ is the set of closer competitors of $w$.

	\begin{prop}[Exact recovery of the hard admissibility baseline]
		\label{prop:baseline}
		For any $w\in V$ and any $\gamma\in\R$, $\score{\gamma}(w;q,V)=d(q,w)$ if and only if every closer competitor $v\in P(w;q,V)$ satisfies $d(v,w)\ge d(q,w)+\gamma$.
	\end{prop}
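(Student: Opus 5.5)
The plan is to argue directly from Definition~\ref{def:rngscore}. By \eqref{eq:rngscore}, $\score{\gamma}(w;q,V)-d(q,w)$ equals the obstruction sum
\[
S_\gamma(w)=\sum_{v\in P(w;q,V)}\hinge{d(q,w)-d(v,w)+\gamma}.
\]
So the claim is that $S_\gamma(w)=0$ exactly when every summand vanishes. We will then translate the vanishing of each summand into the stated inequality.

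The first step is to note that $S_\gamma(w)$ is a finite sum, since $P(w;q,V)\subseteq V$ and $V$ is finite. Each of its terms is nonnegative because $\hinge{x}=\max(x,0)\ge 0$. A finite sum of nonnegative reals is zero if and only if each term is zero. The case $P(w;q,V)=\emptyset$ is covered by the empty-sum convention: both sides of the equivalence then hold vacuously.

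The second step is the pointwise characterization: $\hinge{x}=0$ if and only if $x\le 0$. Applied to $x=d(q,w)-d(v,w)+\gamma$, this gives $d(v,w)\ge d(q,w)+\gamma$ for that $v$. Chaining the two equivalences over all $v\in P(w;q,V)$ yields both directions at once.

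No real obstacle is expected. The only point needing care is the boundary case $d(v,w)=d(q,w)+\gamma$. There the hinge argument is exactly $0$, so the penalty vanishes, which is why the statement uses the non-strict inequality $\ge$. I would also remark that the proof uses nothing about $d$ beyond its being real-valued. In particular it does not use the symmetric-dissimilarity assumption of Appendix~\ref{app:ce-variants}, nor the strictness of the inequality defining $P(w;q,V)$; the latter only determines which $v$ are quantified over.
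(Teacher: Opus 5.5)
Your proof is correct and follows the paper's argument: the obstruction sum consists of nonnegative hinge terms, so it vanishes exactly when each term does, and each hinge $\hinge{d(q,w)-d(v,w)+\gamma}$ is zero precisely when $d(v,w)\ge d(q,w)+\gamma$. Your added handling of the empty competitor set and of the boundary case of equality is consistent with the paper's proof and simply makes explicit what it leaves implicit.
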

	\begin{proof}
		Every summand in the \RNGSCORE is nonnegative, so $\score{\gamma}(w;q,V)=d(q,w)$ exactly when every hinge term is zero. The hinge for $v$ is zero iff $d(q,w)-d(v,w)+\gamma\le 0$, i.e.\ $d(v,w)\ge d(q,w)+\gamma$.
	\end{proof}

	\begin{prop}[Monotonicity in $\gamma$]
		\label{prop:monotone}
		For every fixed $w\in V$, the map $\gamma\mapsto \score{\gamma}(w;q,V)$ is nondecreasing.
	\end{prop}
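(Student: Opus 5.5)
The plan is a direct termwise argument, since the statement concerns a single fixed candidate and a fixed pool. First, I would observe that the set of closer competitors $P(w;q,V)=\{v\in V : d(q,v)<d(q,w)\}$ is defined without reference to $\gamma$. The index set of the sum in Definition~\ref{def:rngscore} is therefore the same finite set for every margin. The leading term $d(q,w)$ is also independent of $\gamma$. All dependence on $\gamma$ thus sits inside the individual hinge summands.

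Next, I would fix $v\in P(w;q,V)$ and set $c_v=d(q,w)-d(v,w)$, a real constant not depending on $\gamma$. The corresponding summand is $\gamma\mapsto\hinge{c_v+\gamma}=\max(c_v+\gamma,0)$. This is the composition of the increasing affine map $\gamma\mapsto c_v+\gamma$ with the nondecreasing map $x\mapsto\max(x,0)$, so it is nondecreasing. Concretely, if $\gamma\le\gamma'$, then $c_v+\gamma\le c_v+\gamma'$, and taking the maximum with $0$ preserves the inequality.

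Finally, I would take $\gamma\le\gamma'$ and sum these inequalities over the finite set $P(w;q,V)$, then add the constant $d(q,w)$ to both sides. This gives $\score{\gamma}(w;q,V)\le\score{\gamma'}(w;q,V)$. If $P(w;q,V)$ is empty, the score is identically $d(q,w)$ and monotonicity holds trivially.

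There is no real obstacle here. The only point needing care is that the competitor set uses the strict inequality on query distances alone and never involves $\gamma$, so the number of summands cannot change with the margin. As a side remark, each summand is also $1$-Lipschitz in $\gamma$, so the score is continuous and piecewise linear in $\gamma$, with slope equal to the number of active hinges. That stronger fact is not needed for the proposition.
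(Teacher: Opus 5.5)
Your proof is correct and follows the paper's own argument: each hinge summand is nondecreasing in $\gamma$, and a finite sum of such terms plus the constant $d(q,w)$ is nondecreasing. Your explicit observation that the index set $P(w;q,V)$ does not depend on $\gamma$ is a detail the paper leaves implicit.
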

	\begin{proof}
		Each term $\hinge{d(q,w)-d(v,w)+\gamma}$ is nondecreasing in $\gamma$ and a finite sum of nondecreasing functions is nondecreasing.
	\end{proof}

	The monotonicity is pointwise in $w$: different candidates' scores can grow at different rates as $\gamma$ increases, so the proposition implies nothing about the induced ranking, the geometric diversity of the selected set or downstream task utility, none of which need vary monotonically with $\gamma$.

	\begin{prop}[Similarity invariance]
		\label{prop:similarity}
		Let $d$ and $\tilde d$ be symmetric nonnegative dissimilarities on $X$ and $\tilde X$ respectively, let $V\subset X$ be finite with $q\in X$, and suppose $f:X\to\tilde X$ is injective on $\{q\}\cup V$ and satisfies $\tilde d(f(x),f(y))=\lambda\,d(x,y)$ for all $x,y\in\{q\}\cup V$ and some $\lambda>0$. Then, writing $\score{\gamma}^d$ for the \RNGSCORE under $d$,
		\[
		\score{\lambda\gamma}^{\tilde d}(f(w);f(q),f(V)) = \lambda\, \score{\gamma}^d(w;q,V)
		\]
		for every $w\in V$. Hence, isometries preserve the score while similarities preserve the induced ordering, provided a deterministic tie rule is applied consistently on both sides.
	\end{prop}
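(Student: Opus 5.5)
The argument is a direct term-by-term substitution into Definition~\ref{def:rngscore}. There are three steps.

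First, we show that $f$ carries the closer-competitor set onto the closer-competitor set. Fix $w\in V$. For $v\in V$ we have $\tilde d(f(q),f(v))=\lambda d(q,v)$ and $\tilde d(f(q),f(w))=\lambda d(q,w)$. Since $\lambda>0$, the strict inequality $d(q,v)<d(q,w)$ holds if and only if $\tilde d(f(q),f(v))<\tilde d(f(q),f(w))$. So $f$ maps $P(w;q,V)$ into $P(f(w);f(q),f(V))$. Conversely, every element of $f(V)$ has the form $f(v)$ with $v\in V$, so the map is onto. Injectivity of $f$ on $\{q\}\cup V$ makes it a bijection between the two competitor sets. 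This matters because the sum in \eqref{eq:rngscore} is over a set: if two pool elements had the same image, the count of summands would change. Injectivity also guarantees that $f(q)$ is distinct from the images of the pool points.

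Second, we compare each summand with its counterpart. Summing over $v\in P(w;q,V)$, the term indexed by $f(v)$ is
\[
\hinge{\tilde d(f(q),f(w))-\tilde d(f(v),f(w))+\lambda\gamma}=\hinge{\lambda\bigl(d(q,w)-d(v,w)+\gamma\bigr)}=\lambda\hinge{d(q,w)-d(v,w)+\gamma}.
\]
The last equality uses positive homogeneity of the hinge: $\max(\lambda x,0)=\lambda\max(x,0)$ for $\lambda>0$. The leading term scales the same way, since $\tilde d(f(q),f(w))=\lambda d(q,w)$. Summing over the bijection from the first step gives $\score{\lambda\gamma}^{\tilde d}(f(w);f(q),f(V))=\lambda\score{\gamma}^d(w;q,V)$.

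Third, we read off the two corollaries.
\begin{itemize}
\item For an isometry, $\lambda=1$, so the margin $\gamma$ and every score are unchanged.
\item For a similarity, every score is multiplied by the same positive constant $\lambda$. This preserves all strict inequalities between scores, and ties map to ties.
\end{itemize}
Hence, once the margin is rescaled to $\lambda\gamma$, the ranking by increasing score is the same for $w$ and $f(w)$, provided the deterministic tie rule is applied consistently to corresponding elements on both sides.

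I expect no real obstacle. The one point that needs care is the bookkeeping in the first step: the strictness of the competitor inequality, the sign of $\lambda$, and injectivity together keep the competitor sets in exact correspondence. The statement also needs to be clear that the margin transforms to $\lambda\gamma$ and is not left fixed.
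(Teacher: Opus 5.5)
Your proof is correct and is essentially the paper's argument: the strict competitor inequality, $\lambda>0$ and injectivity give $P(f(w);f(q),f(V))=f(P(w;q,V))$ bijectively, and positive homogeneity of the hinge, with the margin rescaled to $\lambda\gamma$, scales every term by $\lambda$. One aside is inaccurate: your claim that $f(q)$ differs from all images of pool points needs $q\notin V$, which the statement does not assume, but your argument never uses it.
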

	\begin{proof}
		Injectivity of $f$ on $\{q\}\cup V$ ensures that $f(V)$ is a pool of the same cardinality as $V$, meaning that distinct candidates cannot collapse in $\tilde X$. Since $\lambda>0$, we see that $d(q,v)<d(q,w)$ iff $\tilde d(f(q),f(v))<\tilde d(f(q),f(w))$, so the closer-competitor set of $f(w)$ is exactly $f(P(w;q,V))$. Also, every term in the score is scaled by $\lambda$ and since $\hinge{\lambda x}=\lambda\hinge{x}$, we deduce that $\score{\lambda\gamma}^{\tilde d}(f(w);f(q),f(V))$ is simply a scaling of $\score{\gamma}^d(w;q,V)$ by factor $\lambda$.

		Note that both the distances and the margin are scaled by the same factor $\lambda$: scaling only part of the construction does not preserve the score.
	\end{proof}

	\begin{prop}[Nearest-neighbor limit for $\gamma\to-\infty$]
		\label{prop:knnlimit}
		Let $\mathcal{A}(q,V)=\{(w,v): w\in V,\ v\in P(w;q,V)\}$ denote the set of active pairs. If $\mathcal{A}(q,V)=\varnothing$, then $\score{\gamma}(w;q,V)=d(q,w)$ for every $w\in V$ and every $\gamma\in\R$. Otherwise, let
		\[
		\gamma_0(q,V)=\min_{(w,v)\in\mathcal{A}(q,V)}\{d(v,w)-d(q,w)\}.
		\]
		Then for every $\gamma\le\gamma_0(q,V)$, we have $\score{\gamma}(w;q,V)=d(q,w)$ for all $w\in V$, and ranking by \RNGSCORE coincides with ranking by distance to the query.
	\end{prop}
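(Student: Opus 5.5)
The plan is to reduce everything to Proposition~\ref{prop:baseline}, which says that $\score{\gamma}(w;q,V)=d(q,w)$ exactly when every closer competitor $v\in P(w;q,V)$ satisfies $d(v,w)\ge d(q,w)+\gamma$. I treat the two cases of the statement separately.

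If $\mathcal{A}(q,V)=\varnothing$, then $P(w;q,V)=\varnothing$ for every $w\in V$. The sum in \eqref{eq:rngscore} is then empty, so $\score{\gamma}(w;q,V)=d(q,w)$ for all $\gamma\in\R$ with nothing further to check. (This case occurs, for instance, when all candidates are equidistant from $q$, because of the strict inequality in the definition of $P$.)

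If $\mathcal{A}(q,V)\neq\varnothing$, I first note that $\mathcal{A}(q,V)$ is finite, since $V$ is finite. Hence the minimum defining $\gamma_0(q,V)$ exists and is a real number. Now fix $\gamma\le\gamma_0(q,V)$ and take any $w\in V$ and any $v\in P(w;q,V)$. Then $(w,v)\in\mathcal{A}(q,V)$, so $d(v,w)-d(q,w)\ge\gamma_0(q,V)\ge\gamma$. This rearranges to $d(v,w)\ge d(q,w)+\gamma$, and Proposition~\ref{prop:baseline} gives $\score{\gamma}(w;q,V)=d(q,w)$. Since $w$ was arbitrary, this holds for every $w\in V$. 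Equivalently, each hinge argument $d(q,w)-d(v,w)+\gamma$ is at most $0$, so every summand vanishes.

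For the ranking claim, the score is now identical to $d(q,\cdot)$ on $V$, so sorting by increasing score is the same as sorting by increasing query distance. The paper fixes one deterministic tie-breaking routine shared by all methods, including $k$-NN. Because the two sort keys agree on every candidate, ties are broken the same way and the selected top-$k$ coincide exactly.

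There is no real obstacle here. The only points needing care are that the minimum exists (finiteness of $V$) and that the ranking coincidence relies on the shared tie rule rather than on distinct distances. I would also remark that $\gamma_0$ may be negative but is bounded below by $-\max_w d(q,w)$, because $d(v,w)\ge0$. Under cosine distance $d(q,w)\le 2$, so $\gamma_0\ge-2$, and $\gamma=-2$ always satisfies the hypothesis. This is the uniform deactivating margin cited in Section~\ref{sec:theory}.
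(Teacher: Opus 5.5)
Your proof is correct and matches the paper's argument: the empty case is immediate, and for $\gamma\le\gamma_0(q,V)$ every hinge argument $d(q,w)-d(v,w)+\gamma$ is nonpositive. The paper shows this vanishing directly, while you obtain it equivalently through Proposition~\ref{prop:baseline}; your remarks on the shared tie rule and on the uniform bound $\gamma_0\ge-2$ under cosine distance agree with the paper's surrounding discussion.
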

	\begin{proof}
		If $\mathcal{A}(q,V)$ is empty, every sum in \eqref{eq:rngscore} is empty. Otherwise, for every $(w,v)\in\mathcal{A}(q,V)$ and $\gamma\le\gamma_0(q,V)$,
		\[
		d(q,w)-d(v,w)+\gamma \le d(q,w)-d(v,w)+\gamma_0(q,V) \le 0,
		\]
		so every hinge term vanishes simultaneously.
	\end{proof}

	For unit-normalized embeddings under cosine distance, $d\in[0,2]$ and hence $d(v,w)-d(q,w)\ge-2$, so the uniform choice $\gamma\le-2$ realizes the fallback on every finite pool. A smaller-magnitude grid endpoint such as $\gamma=-0.5$ carries no such guarantee: where the paper reports that it reproduces $k$-NN, this is an empirical property verified on the pools in question (Section~\ref{sec:protocol}).

	\begin{prop}[Nearest-neighbor limit for $\gamma\to+\infty$]
		\label{prop:knnlimit_pos}
		Suppose $V$ is finite and in general position with respect to $q$, meaning that the values $\{d(q,v):v\in V\}$ are pairwise distinct. For $w\in V$, write $r_w=d(q,w)$, $n_w=|P(w;q,V)|$ and
		\[
		b_w=(n_w+1)\,r_w-\sum_{v\in P(w;q,V)}d(v,w),
		\]
		and define, with the convention $\max\varnothing=-\infty$,
		\[
		\gamma_{\mathrm{act}}=\max_{(w,v)\in\mathcal{A}(q,V)}\{d(v,w)-r_w\},
		\qquad
		\gamma_{\mathrm{ord}}=\max_{\substack{u,w\in V\\ n_u<n_w}}\frac{b_u-b_w}{n_w-n_u}.
		\]
		For every $\gamma>\max\{\gamma_{\mathrm{act}},\gamma_{\mathrm{ord}}\}$, the ordering induced by $\score{\gamma}$ coincides with the ordering induced by $d(q,\cdot)$, i.e.\ with the $k$-NN ranking.
	\end{prop}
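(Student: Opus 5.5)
The plan is to show that, once $\gamma$ is large, every score becomes an affine function of $\gamma$ whose slope is the competitor count $n_w$. The slope then orders candidates exactly as distance to the query does.

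\emph{Step 1: linearize the hinge terms.} For $\gamma>\gamma_{\mathrm{act}}$, every pair $(w,v)\in\mathcal{A}(q,V)$ satisfies $d(q,w)-d(v,w)+\gamma>0$. Hence every hinge is active and equals its argument, and summing over $P(w;q,V)$ gives
\[
\score{\gamma}(w;q,V)=r_w+\sum_{v\in P(w;q,V)}\bigl(r_w-d(v,w)+\gamma\bigr)=b_w+n_w\gamma .
\]
If $\mathcal{A}(q,V)=\varnothing$, all sums are empty. Since the $r_w$ are distinct, this forces $|V|\le 1$, and the statement is trivial.

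\emph{Step 2: identify the slopes with $k$-NN ranks.} Under general position the values $r_w$ are pairwise distinct. So $n_w=|\{v:r_v<r_w\}|$ is exactly the $k$-NN rank of $w$ minus one, and these ranks are distinct. In particular $r_u<r_w$ if and only if $n_u<n_w$. It therefore suffices to show that $n_u<n_w$ implies $\score{\gamma}(u)<\score{\gamma}(w)$.

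\emph{Step 3: compare two affine functions.} For $n_u<n_w$ we have $b_u+n_u\gamma<b_w+n_w\gamma$ if and only if $\gamma>(b_u-b_w)/(n_w-n_u)$. Any $\gamma>\gamma_{\mathrm{ord}}$ exceeds every such ratio, so the strict inequality holds for all such pairs simultaneously. The resulting strict total order agrees with the order by $d(q,\cdot)$, and no ties occur, so the tie rule is never consulted.

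The main obstacle is bookkeeping rather than substance. One must verify that $\gamma>\gamma_{\mathrm{act}}$ really activates every hinge, which is where the strict inequality and the definition of $\gamma_{\mathrm{act}}$ as a maximum over active pairs are used. One must also note that general position is needed in Step 2: with equal $r$ values the slopes could tie, and the ordering would then depend on the $b$ terms.
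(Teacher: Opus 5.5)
Your proof is correct and follows the paper's own argument step for step. Like the paper, you linearize every hinge once $\gamma$ exceeds $\gamma_{\mathrm{act}}$ to get $\score{\gamma}(w;q,V)=n_w\gamma+b_w$, identify $n_w$ with the query-distance rank minus one using general position, and use $\gamma>\gamma_{\mathrm{ord}}$ to order the affine scores by slope. Your separate handling of $\mathcal{A}(q,V)=\varnothing$ is harmless but not needed, since the linear formula already covers that case with $n_w=0$ and $b_w=r_w$.
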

	\begin{proof}
		For $\gamma\ge\gamma_{\mathrm{act}}$, every hinge term is active, so
		\[
		\score{\gamma}(w;q,V)=r_w+\sum_{v\in P(w;q,V)}\bigl(r_w-d(v,w)+\gamma\bigr)=n_w\,\gamma+b_w.
		\]
		Since the query distances are pairwise distinct, $n_w$ equals the query-distance rank of $w$ minus one, so $w\mapsto n_w$ is a bijection onto $\{0,\dots,|V|-1\}$. If $n_u<n_w$, then $\gamma>\gamma_{\mathrm{ord}}$ gives $(n_w-n_u)\,\gamma>b_u-b_w$ and hence $n_u\gamma+b_u<n_w\gamma+b_w$. The score ordering therefore agrees with the strictly increasing values $n_w$, which agree with the query-distance ordering.
	\end{proof}

	\begin{cor}[Fallback under validated tuning]
		\label{cor:fallback}
		Let $A\subset\R$ be a finite tuning grid containing some margin $\gamma$ that induces the $k$-NN ranking on every validation pool (for instance $\gamma\le\gamma_0(q,V)$ for every validation query, or uniformly $\gamma\le-2$ under cosine distance, Proposition~\ref{prop:knnlimit}), and let $\gamma^{*}$ maximize any evaluation functional of the induced top-$k$ ranking over the validation set. Then the validation performance at $\gamma^{*}$ is at least that of nearest-neighbor ranking, which is realised by a grid member.
	\end{cor}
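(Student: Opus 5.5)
The argument is the optimization identity mentioned in Section~\ref{sec:theory}: a maximum over a finite set dominates the value at any single member. It needs no probabilistic step; only the link between the $k$-NN baseline and a grid member uses an earlier result.

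First, fix notation. For each margin $\gamma\in A$, the \RNGSCORE induces on each validation pool $(q,V)$ a ranking of $V$. The ties are resolved by the same deterministic sorting routine used for $k$-NN, so the ranking is well defined. Its top-$k$ prefix is $S_\gamma(q)$. Let $\Phi$ be the evaluation functional, applied to the family $\{S_\gamma(q)\}_q$ over the validation set. Only one property of $\Phi$ is needed: it depends on $\gamma$ solely through the induced top-$k$ selections. Write $F(\gamma)=\Phi(\{S_\gamma(q)\}_q)$.

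Second, identify the fallback member. By hypothesis there is $\gamma_{\mathrm{fb}}\in A$ whose induced ranking equals the $k$-NN ranking on every validation pool. When this is realized through $\gamma_{\mathrm{fb}}\le\gamma_0(q,V)$ for all validation queries, Proposition~\ref{prop:knnlimit} gives $\score{\gamma_{\mathrm{fb}}}(w;q,V)=d(q,w)$ for every $w$. Under cosine distance on unit vectors the same holds uniformly for $\gamma_{\mathrm{fb}}\le-2$, by the remark after that proposition, since $d(v,w)-d(q,w)\ge-2$. Because the scores coincide with $d(q,\cdot)$ and the tie rule is shared, the sorted orders coincide exactly, so $S_{\gamma_{\mathrm{fb}}}(q)=k\text{-}\mathrm{NN}(q)$. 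This is the one point needing care: equal scores must translate into equal orders, including among tied distances, and the paper's convention that every method uses the identical tie-breaking routine supplies it. Consequently $F(\gamma_{\mathrm{fb}})$ equals the validation performance of nearest-neighbor ranking.

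Third, conclude. $A$ is finite and nonempty, so a maximizer $\gamma^{*}$ of $F$ over $A$ exists. By maximality, $F(\gamma^{*})\ge F(\gamma_{\mathrm{fb}})$, which is the $k$-NN validation performance, and this baseline value is attained by the grid member $\gamma_{\mathrm{fb}}$. I expect no real obstacle. I would add that no inequality is claimed for unseen queries, which is the role of Theorem~\ref{thm:valbound}.
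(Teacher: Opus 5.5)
Your proposal is correct and follows the paper's own reasoning, which treats the corollary as an optimization identity: Proposition~\ref{prop:knnlimit} (or the uniform $\gamma\le-2$ bound under cosine distance) supplies the fallback grid member, and maximality of $\gamma^{*}$ over the finite grid does the rest. Your explicit check that identical scores under the shared deterministic tie rule give identical top-$k$ orders is left implicit in the paper, and it is the right detail to make explicit.
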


	Corollary~\ref{cor:fallback} is an optimization identity on the validation sample, not a statement about deployment queries. The following bound is its finite-sample population counterpart.

	\begin{thm}[Finite-validation fallback bound]
		\label{thm:valbound}
		Let $Z_1,\ldots,Z_n$ be independent validation units drawn from the deployment distribution, let $0<\delta<1$ and let $A$ be a nonempty finite margin grid fixed independently of these units. Suppose the per-unit utility $u(\gamma,Z)$ takes values in $[0,1]$ and that some $\gamma_{\mathrm{nn}}\in A$ induces the $k$-NN ranking throughout the deployment support (under cosine distance, any $\gamma_{\mathrm{nn}}\le-2$ suffices by Proposition~\ref{prop:knnlimit}). Define
		\[
		U(\gamma)=\E\,u(\gamma,Z),\qquad
		\widehat U_n(\gamma)=\frac1n\sum_{i=1}^n u(\gamma,Z_i),\qquad
		\widehat\gamma\in\arg\max_{\gamma\in A}\widehat U_n(\gamma).
		\]
		Then, with probability at least $1-\delta$,
		\[
		U(\widehat\gamma)\;\ge\;U(\gamma_{\mathrm{nn}})-2\sqrt{\frac{\log(2|A|/\delta)}{2n}}.
		\]
	\end{thm}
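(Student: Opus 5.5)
The plan is the standard uniform-convergence argument for empirical risk maximization over a finite class, using Hoeffding's inequality together with a union bound over the grid $A$. Fix $\gamma\in A$. Because $A$ is chosen independently of the units, the variables $u(\gamma,Z_1),\ldots,u(\gamma,Z_n)$ are independent and take values in $[0,1]$, and their mean is $U(\gamma)$. Set $\varepsilon=\sqrt{\log(2|A|/\delta)/(2n)}$. The two-sided Hoeffding inequality then gives $\Pr\bigl(|\widehat U_n(\gamma)-U(\gamma)|>\varepsilon\bigr)\le 2\exp(-2n\varepsilon^2)=\delta/|A|$. A union bound over the $|A|$ grid members yields an event $E$ of probability at least $1-\delta$ on which $|\widehat U_n(\gamma)-U(\gamma)|\le\varepsilon$ holds simultaneously for every $\gamma\in A$.

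On $E$, the three-step chain is
\[
U(\widehat\gamma)\ge \widehat U_n(\widehat\gamma)-\varepsilon\ge \widehat U_n(\gamma_{\mathrm{nn}})-\varepsilon\ge U(\gamma_{\mathrm{nn}})-2\varepsilon .
\]
The first and last inequalities are the uniform deviation bound. The middle one holds because $\widehat\gamma$ maximizes $\widehat U_n$ over $A$ and $\gamma_{\mathrm{nn}}\in A$. This is precisely the optimization identity of Corollary~\ref{cor:fallback}, now applied with the validation sample as the evaluation set. Since $2\varepsilon$ is the stated deviation term, the theorem follows.

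The one step that needs care is the data dependence of $\widehat\gamma$. Hoeffding cannot be applied directly at $\widehat\gamma$, because that margin is selected using the same units. This is why the union bound must be taken over the whole fixed grid before $\widehat\gamma$ is selected; independence of $A$ from the units is what makes it legitimate. Ties in the $\arg\max$ do not matter, since the bound holds for any maximizer.

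The hypothesis that $\gamma_{\mathrm{nn}}$ induces the $k$-NN ranking on the deployment support is not used in the inequality itself. Its role is interpretive: it identifies $U(\gamma_{\mathrm{nn}})$ with the population utility of nearest-neighbor selection. Proposition~\ref{prop:knnlimit} supplies this uniformly for $\gamma_{\mathrm{nn}}\le-2$ under cosine distance. Finally, if only one-sided control is wanted, each side's failure probability could be set to $\delta/2$. The stated constant $\log(2|A|/\delta)$ already matches the two-sided version, so no refinement is needed.
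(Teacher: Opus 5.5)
Your proposal is correct and follows the paper's proof essentially line for line: Hoeffding plus a union bound over the fixed grid $A$ gives uniform $\varepsilon$-control, and then the chain $U(\widehat\gamma)\ge\widehat U_n(\widehat\gamma)-\varepsilon\ge\widehat U_n(\gamma_{\mathrm{nn}})-\varepsilon\ge U(\gamma_{\mathrm{nn}})-2\varepsilon$ finishes the argument. Your two extra remarks are also accurate: the union bound has to be taken before the data-dependent choice of $\widehat\gamma$, and the $k$-NN hypothesis on $\gamma_{\mathrm{nn}}$ is only interpretive and is not used in the inequality.
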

	\begin{proof}
		By Hoeffding's inequality and a union bound over $A$, with probability at least $1-\delta$,
		\[
		\sup_{\gamma\in A}\bigl|\widehat U_n(\gamma)-U(\gamma)\bigr|\le\varepsilon,
		\qquad
		\varepsilon=\sqrt{\frac{\log(2|A|/\delta)}{2n}}.
		\]
		On this event, $U(\widehat\gamma)\ge\widehat U_n(\widehat\gamma)-\varepsilon\ge\widehat U_n(\gamma_{\mathrm{nn}})-\varepsilon\ge U(\gamma_{\mathrm{nn}})-2\varepsilon$.
	\end{proof}

	For a utility reported on a $0$--$100$ scale, the error term is multiplied by $100$. The bound requires independent validation units: when queries share source documents or construction families, $n$ counts independent clusters, each contributing one bounded cluster-level utility. It does not cover distribution shift, grids adjusted after inspecting test results, or a grid endpoint that merely happened to reproduce $k$-NN on the observed validation pools.

	\section{Cross-encoder dissimilarity}
	\label{app:ce-variants}

	This appendix records how the \RNGSCORE, defined through an embedding-space distance, is adapted to the cross-encoder setting of Section~\ref{sec:rq1-crossencoder}.

	A \emph{symmetric nonnegative dissimilarity} on a set $X$ is a function $d:X\times X\to[0,\infty)$ satisfying, for all $x,y\in X$, $d(x,x)=0$ and $d(x,y)=d(y,x)$. Unlike a metric, it does not require $d(x,y)>0$ for $x\neq y$ nor the triangle inequality $d(x,z)\leq d(x,y)+d(y,z)$. The weaker requirement is what the construction below needs: a cross-encoder may assign a distinct candidate relevance one, in which case $d_{\CE}(q,v)=0$ and separation fails.

	In what follows, $\similarity{\CE}(q,w)\in[0,1]$ denotes the cross-encoder relevance score of candidate $w$ for query $q$, and $d$ denotes the underlying embedding-space distance (used for document--document comparisons and for the pool radius below).

	Essentially, the construction defines a symmetric nonnegative dissimilarity $d_{\CE}$ on $V\cup\{q\}$ and applies the \RNGSCORE directly: set $d_{\CE}(q,q)=0$ and $d_{\CE}(q,v)=d_{\CE}(v,q)=1-\similarity{\CE}(q,v)$ for all $v\in V$, so that
	\[
	P(w;q,V)=\bigl\{v\in V : \similarity{\CE}(q,v) > \similarity{\CE}(q,w)\bigr\},
	\]
	i.e.\ the closer competitors of $w$ are exactly the candidates the cross-encoder ranks above it. This is an instance of the decoupled-geometry observation of Section~\ref{sec:kernel}: relevance ordering from the cross-encoder, redundancy from the embedding space. The document--document side requires making the two scales commensurable per pool. The reference choice is the linear normalization
	\[
	d_{\CE}(v,w) = \frac{d(v,w)}{R},
	\qquad R=\max_{v\in V} d(q,v),
	\]
	giving
	\[
	\score{\gamma}(w;q,V) = 1-\similarity{\CE}(q,w) + \sum_{v\in P(w;q,V)}\hinge{1-\similarity{\CE}(q,w)-\tfrac{d(v,w)}{R}+\gamma}.
	\]
	Here $R$ is the pool radius in embedding space, positive unless every candidate coincides with the query, in which case the implementation sets $R=1$. Note that the maximal query--document distance need not bound the document--document distances, so $d_{\CE}(v,w)$ may exceed one. To obtain a bounded normalization, we could replace $R$ with $R_{\mathrm{all}}$, the largest embedding distance over all pairs of $V\cup\{q\}$. Nevertheless, in the experiments, four alternative embedding distance variants (sigmoidal and reciprocal normalizations of $d(v,w)$, a relevance-scaled distance and the raw cosine distance) were considered as well. All five variants land within half an F1 point of one another at their tuned margins, which suggests that the normalization choice does not have a substantial impact on the results.

	In the cross-encoder experiments of Section~\ref{sec:rq1-crossencoder}, the validation-tuned margins land at or near the fallback limits on every dataset: $\gamma^{*}=-0.5$ on HotpotQA, $-0.3$ on 2WikiMultiHopQA (where Proposition~\ref{prop:knnlimit} deactivates the penalty and the cross-encoder ranking is reproduced exactly), $-0.35$ on NQ-Open and $+1.1$ on MuSiQue, near the positive limit of Proposition~\ref{prop:knnlimit_pos}.

	\section{Kernelization}
	\label{app:kernel}

	For any positive-semidefinite kernel $K$ on $V\cup\{q\}$, the kernel-induced squared distance $d_K(x,y)=K(x,x)+K(y,y)-2K(x,y)$ is a symmetric nonnegative dissimilarity, which is all the \RNGSCORE requires, and every hinge term depends on kernel evaluations only so the score never needs explicit coordinates. Since a \DPP\ consumes the same input, one kernel feeds both methods: \DPP\ performs global log-determinant inference, the \RNGSCORE local obstruction scoring. The binary lune condition is invariant under squaring ($x<y\Leftrightarrow x^{2}<y^{2}$ for nonnegative reals), so kernelization changes only the soft penalty weighting, not the obstruction graph. For unit-normalized embeddings, the linear-kernel $d_K$ is twice the cosine distance, so the kernel view adds content exactly where vector norms carry information.

	\section{Statistical procedure}
	\label{app:stats}

	\paragraph{Design.}
	The primary endpoint is S-Recall@$k$. For every query, seed and condition of a sweep, the analysis forms the paired difference between each method's validation-selected member and $k$-NN on the same pool, then averages seed observations within each query: seeds are not independent replications, their $80\%$ test splits sharing most queries. Uncertainty comes from a query-clustered bootstrap ($1{,}999$ resamples, fixed seed) that resamples whole query trajectories, one weight vector shared across all methods and conditions of a run, so the paired and repeated-measures structure survives every resample. Level effects carry two-sided $95\%$ percentile intervals and centered bootstrap $p$-values. The prespecified primary hypothesis of each sweep is the method$\times$redundancy interaction --- the heavy-minus-clean change of a method's paired effect --- tested one-sidedly for improvement with redundancy. Holm step-down adjustment controls the familywise error separately over a sweep's method-by-level effects and over its primary interactions: ``significant'' refers to the adjusted value, raw values being retained in the released archive. Each family comprises six selectors, the four of Table~\ref{tab:interactions} and two auxiliary ones the tables do not display (a near-duplicate filter and a greedy Vendi-score selector), which makes the adjustment conservative for the reported four. The re-executions ran all six on every sweep; the oracle analyses use the originally executed runs, whose library is the five displayed selectors on the HotpotQA, SciFact and NQ-Open sweeps plus the two auxiliary ones elsewhere, which can only raise the bound.

	\paragraph{Primary interaction results.}
	Table~\ref{tab:interactions} reports the interaction for the four evaluated diversifiers on all thirteen sweeps. Every entry is positive and all are significant after Holm adjustment except \MMR$^{*}$ under SciFact injection: on single-hop SciFact diversification has nothing to recover, so the \MMR{} effect does not change with redundancy, consistent with the evidence-multiplicity reading. Whether evidence multiplicity moderates the crossover is not separately identifiable from dataset effects, single- and multi-evidence queries occurring largely in different datasets: the MuSiQue hop-count analysis of Appendix~\ref{app:single-stage} is the within-dataset evidence.

	\begin{table}[ht]
		\centering
		\small
		\caption{Primary test: heavy-minus-clean change of each method's paired S-Recall effect against $k$-NN, in percentage points (one-sided $95\%$ lower bound in parentheses). All entries are significant after Holm adjustment except the one marked $\dagger$.}
		\label{tab:interactions}
		\begin{tabular}{llcccc}
			\hline
			\textbf{Dataset} & \textbf{Mechanism} & \MMR$^{*}$ & Maxmin & Greedy \DPP & \RNGSCORE \\
			\hline
			HotpotQA & injection & $+23.0$ {\scriptsize($+22.5$)} & $+32.4$ {\scriptsize($+31.8$)} & $+29.5$ {\scriptsize($+28.9$)} & $+12.5$ {\scriptsize($+12.1$)} \\
			HotpotQA & chunk overlap & $+2.1$ {\scriptsize($+1.8$)} & $+4.8$ {\scriptsize($+4.4$)} & $+4.4$ {\scriptsize($+4.0$)} & $+3.0$ {\scriptsize($+2.7$)} \\
			2WikiMultiHopQA & injection & $+22.0$ {\scriptsize($+21.7$)} & $+27.4$ {\scriptsize($+26.8$)} & $+24.3$ {\scriptsize($+23.9$)} & $+12.1$ {\scriptsize($+11.8$)} \\
			2WikiMultiHopQA & chunk overlap & $+4.3$ {\scriptsize($+4.0$)} & $+9.9$ {\scriptsize($+9.5$)} & $+8.3$ {\scriptsize($+7.9$)} & $+6.0$ {\scriptsize($+5.7$)} \\
			MuSiQue & injection & $+23.8$ {\scriptsize($+23.0$)} & $+31.6$ {\scriptsize($+30.3$)} & $+27.5$ {\scriptsize($+26.6$)} & $+17.3$ {\scriptsize($+16.5$)} \\
			MuSiQue & chunk overlap & $+9.6$ {\scriptsize($+8.7$)} & $+13.5$ {\scriptsize($+12.2$)} & $+10.8$ {\scriptsize($+9.9$)} & $+9.9$ {\scriptsize($+9.1$)} \\
			NQ-Open & chunk overlap & $+2.6$ {\scriptsize($+2.1$)} & $+3.2$ {\scriptsize($+2.5$)} & $+2.8$ {\scriptsize($+2.2$)} & $+2.4$ {\scriptsize($+1.9$)} \\
			SciFact & injection & $+0.5^{\dagger}$ {\scriptsize($-1.3$)} & $+30.9$ {\scriptsize($+26.4$)} & $+13.3$ {\scriptsize($+9.4$)} & $+2.0$ {\scriptsize($+0.9$)} \\
			SciFact & chunk overlap & $+3.1$ {\scriptsize($+0.9$)} & $+8.4$ {\scriptsize($+4.7$)} & $+4.8$ {\scriptsize($+2.0$)} & $+4.6$ {\scriptsize($+2.7$)} \\
			FiQA-2018 & injection & $+10.3$ {\scriptsize($+8.8$)} & $+34.1$ {\scriptsize($+30.8$)} & $+27.2$ {\scriptsize($+24.3$)} & $+10.5$ {\scriptsize($+9.1$)} \\
			FiQA-2018 & chunk overlap & $+4.3$ {\scriptsize($+2.5$)} & $+6.3$ {\scriptsize($+3.5$)} & $+7.3$ {\scriptsize($+4.6$)} & $+3.4$ {\scriptsize($+2.1$)} \\
			TREC-COVID & injection & $+12.8$ {\scriptsize($+10.6$)} & $+11.5$ {\scriptsize($+8.4$)} & $+16.7$ {\scriptsize($+13.8$)} & $+14.2$ {\scriptsize($+11.9$)} \\
			TREC-COVID & chunk overlap & $+4.4$ {\scriptsize($+3.2$)} & $+5.6$ {\scriptsize($+2.5$)} & $+6.0$ {\scriptsize($+4.1$)} & $+5.1$ {\scriptsize($+3.6$)} \\
			\hline
		\end{tabular}
	\end{table}

	\paragraph{Provenance and coverage.}
	The method-effect and interaction values are computed from the fixed-pool re-executions of the thirteen sweeps (Section~\ref{sec:protocol}). The displayed injection and chunk-overlap sweep tables (Tables~\ref{tab:rq4-redundancy} and~\ref{tab:rq4-chunking}) likewise show the fixed-pool re-executions, as query-level test means with seed observations averaged within query. The remaining tables, and the figures, show the originally executed runs, whose paired effects agree with the corrected ones to about a tenth of a point on the headline sweep. Level values shift more on the small-pool sweeps, where freezing the candidate count shrinks the heavy pools: $k$-NN moves by up to thirteen points on 2WikiMultiHopQA and MuSiQue (already reflected in Table~\ref{tab:rq4-chunking}) while the paired interactions stay positive and significant (Table~\ref{tab:interactions}). The displays on original pools most affected are therefore the chunk-overlap transfer rows and oracle panel, Table~\ref{tab:single-stage} (whose $k$-NN levels of $85.7$ and $69.2$ read $88.5$ and $74.1$ under fixed pools) and the 2WikiMultiHopQA and MuSiQue injection rows of Table~\ref{tab:rq6-transfer}, where $k$-NN at $\rho=1$ is nine to ten points lower on the original pools, so their heavy$\Delta$ are upper readings (the fixed-pool interactions of the tuned \MMR\ remain $+22.0$ and $+23.8$ points). Figure~\ref{fig:generation} also keeps the original pools. The \texttt{Qwen3.8-27B} clean- and heavy-level runs of Table~\ref{tab:generation} were likewise executed under the original truncation policy of Section~\ref{sec:protocol}; their $\rho=0$ cells are unaffected, since no copies are injected there. The transfer, oracle, regret and generation quantities are computed under the same procedure from the per-query rows of the originally executed runs, exact match as a paired risk difference since the outcome is binary. The \texttt{Qwen3.8-27B} generation columns, which are single deterministic runs, and the families with only aggregate retention -- ArguAna and Webis-Touch\'e2020, and the non-HotpotQA cross-encoder comparisons -- are reported descriptively. Embedding tensors, post-encoder rankings and the \texttt{flan-t5-base} answer strings were not serialized by the original runs; they can be regenerated from the frozen pools, the pinned model revisions and the archived code. The complete archive (contrasts, intervals, raw and adjusted $p$-values, improved/tied/harmed fractions, and an independent validator) is released with the code.

	\paragraph{Limits.}
	Queries sharing source documents carry no family identifiers, so the query is the finest resampling unit and the intervals do not account for residual between-query dependence. The realized interval widths imply minimum detectable effects of roughly $0.4$ points on the HotpotQA-scale sets and $1.5$--$2.5$ points on the $50$--$300$-query BEIR sets, which is why conclusions on the small sets are exploratory. The materiality tolerance is a reporting convention, not an equivalence test.

	\section{Single-stage comparisons on the remaining QA datasets}
	\label{app:single-stage}

	Table~\ref{tab:single-stage} reports single-stage reranking on the non-overlapping pools of the three QA datasets whose main-body comparisons run under the cross-encoder pipeline (Table~\ref{tab:rq2-ce}), alongside each pool's measured redundancy \eqref{eq:pool-redundancy}. 2WikiMultiHopQA reproduces the HotpotQA conclusions: every fixed-pressure diversifier is significantly below $k$-NN and the tuned \RNGSCORE sits marginally above it. Mild diversification pays on MuSiQue: $+2.1$ S-Recall points for parameter-free greedy \DPP and for \MMR($\lambda=0.9$) while $+2.5$ for the tuned \RNGSCORE. NQ-Open sits between, with small gains for the mildest operating points.

	The tuned $\gamma$ margins are read as in Section~\ref{sec:discussion} jointly with how often the induced selection deviates from $k$-NN. On 2WikiMultiHopQA, the selected $\gamma=-0.3$ is the fallback end of the sweep grid and reproduces $k$-NN on $98.8\%$ of queries: validation essentially says ``do not diversify''. On NQ-Open, a single-hop dataset, the mild negative margin acts on only $2.8\%$ of queries and, where it acts, improves S-Recall on twice as many queries as it degrades. On MuSiQue, the selected margin is mild ($\gamma=0.1$, with $\gamma=-0.1$ performing comparably). The penalty fires at ordinary inter-passage distances rather than at duplicate scale, deviating from $k$-NN on $16.5\%$ of queries. What that pressure buys tracks evidence multiplicity rather than duplicate structure: split by hop count, the tuned margin's gain grows from $+1.1$ S-Recall points on $2$-hop queries through $+3.3$ on $3$-hop to $+4.8$ on $4$-hop, where the $k$-NN baseline falls to $50.5$ as up to four gold paragraphs compete for five slots. \MMR($\lambda=0.9$) and parameter-free \DPP\ show the same gradient, and the $2$-hop rows behave like 2WikiMultiHopQA.

	\begin{table}[ht]
		\centering
		\small
		\caption{Single-stage reranking on the natural pools of the remaining QA datasets (S-Recall@5, \texttt{bge-m3}, single run each; red$_{0}$ = measured pool redundancy \eqref{eq:pool-redundancy}; tuned operating points $\lambda=0.9/0.9/0.7$ and $\gamma=-0.3/0.1/-0.1$ for 2WikiMultiHopQA/MuSiQue/NQ-Open). Best in bold, second best underlined.}
		\label{tab:single-stage}
		\setlength{\tabcolsep}{5pt}
		\begin{tabular}{lcccccc}
			\hline
			\textbf{Dataset} & \textbf{red$_{0}$} & \textbf{$k$-NN} & \textbf{Maxmin} & \textbf{\DPP} & \textbf{\MMR$^{*}$} & \textbf{\RNGSCORE$^{*}$} \\
			\hline
			2WikiMultiHopQA & 0.0006 & \underline{85.7} & 69.3 & 84.4 & 85.5 & \textbf{85.8} \\
			MuSiQue         & 0.0011 & 69.2 & 48.4 & \underline{71.3} & \underline{71.3} & \textbf{71.7} \\
			NQ-Open         & 0.0007 & 92.1 & 86.8 & 91.6 & \underline{92.8} & \textbf{92.9} \\
			\hline
		\end{tabular}
	\end{table}

\end{document}